\newtheorem{proposition}{Proposition}
\newcommand{\comment}[1]{}
\newcolumntype{P}[1]{>{\centering\arraybackslash}p{#1}}
\newcolumntype{M}[1]{>{\centering\arraybackslash}m{#1}}
\title{Boolean function metrics can assist modelers to check and choose logical rules}
\author{John Zobolas\textsuperscript{1,*}, Pedro T. Monteiro\textsuperscript{2,3}, Martin Kuiper\textsuperscript{1} and Åsmund Flobak\textsuperscript{4,5} \\
{\small \textsuperscript{1}Department of Biology, Norwegian University of Science and Technology (NTNU), Trondheim, Norway} \\
{\small \textsuperscript{2}Department of Computer Science and Engineering, Instituto Superior Técnico (IST) - Universidade de } \\ {\small Lisboa, Lisbon, Portugal} \\
{\small \textsuperscript{3}INESC-ID, Lisbon, Portugal} \\
{\small \textsuperscript{4}Department of Clinical and Molecular Medicine, Norwegian University of Science and Technology (NTNU),}\\ {\small Trondheim, Norway} \\
{\small \textsuperscript{5}The Cancer Clinic, St. Olav’s Hospital, Trondheim, Norway} \\
{\small \textsuperscript{*}{\small To whom correspondence should be addressed.}}
}
\begin{document}

\maketitle
\begin{abstract}
Computational models of biological processes provide one of the most powerful methods for a detailed analysis of the mechanisms that drive the behavior of complex systems.
Logic-based modeling has enhanced our understanding and interpretation of those systems.
Defining rules that determine how the output activity of biological entities is regulated by their respective inputs has proven to be challenging. 
Partly this is because of the inherent noise in data that allows multiple model parameterizations to fit the experimental observations, but some of it is also due to the fact that models become increasingly larger, making the use of automated tools to assemble the underlying rules indispensable.

We present several Boolean function metrics that provide modelers with the appropriate framework to analyze the impact of a particular model parameterization. 
We demonstrate the link between a semantic characterization of a Boolean function and its consistency with the model’s underlying regulatory structure. 
We further define the properties that outline such consistency and show that several of the Boolean functions under study violate them, questioning their biological plausibility and subsequent use.
We also illustrate that regulatory functions can have major differences with regard to their asymptotic output behavior, with some of them being biased towards specific Boolean outcomes when others are dependent on the ratio between activating and inhibitory regulators.

Application results show that in a specific signaling cancer network, the function bias can be used to guide the choice of logical operators for a model that matches data observations.
Moreover, graph analysis indicates that the standardized Boolean function bias becomes more prominent with increasing numbers of regulators, confirming the fact that rule specification can effectively determine regulatory outcome despite the complex dynamics of biological networks.
\end{abstract}

\keywords{Boolean regulatory networks \and Boolean functions \and Truth Density \and Bias \and Complexity}


\section{Introduction}

The understanding of biological processes has been greatly stimulated by systems biology approaches \cite{Kitano2002,Chuang2010,Apweiler2018}.
The integration of mathematical models with the underlying biological knowledge and empirical observations can help us observe emergent systems properties, test new hypotheses, enhance the interpretability of the studied systems and guide innovations in areas such as medicine and drug discovery \cite{Aldridge2006}.
While multiple mathematical modeling frameworks exist, the scarcity of experimental data and the challenges posed by the development of quantitative large-scale biological networks, has favoured the simplicity and intuitiveness of more qualitative approaches, such as logic-based modeling \cite{Morris2010}.

At the heart of the mathematical representation of molecular biological networks lies the concept of regulation.
Regulation of activity, typically by changing the modification state, location or concentration of a biological entity, is a process which can be expressed by a mathematical function that combines the various regulatory inputs that affect the target, with a logic that describes how these regulators are integrated.
In Boolean logic-based modeling, the regulatory inputs are entities which can be expressed in two states: active ($1$) or inactive ($0$).
These entities are combined with logical rules to derive the \textit{Boolean regulatory function} (BRF) of the target entity.
For every possible regulatory input (combination of $0$ and $1$’s) the BRF will produce the end regulatory product, which is the activity of the target ($0$ or $1$).

The construction of a Boolean computational model starts with the assembly of information from literature and experimental observations, in the form of a Prior Knowledge Network (PKN), i.e. a list of network entities and their causal interactions (positive or negative) \cite{Traynard2017, Toure2020}.
The use of a PKN for accurate representation of biological reality and subsequent analysis and simulation requires the definition of the model formalism.
This is one of the most important steps in dynamical modeling since it directly translates to the choice of BRFs, i.e. the logical rules that together with the regulators define the activity state of each network target \cite{Wang2012}.
There have been several approaches related to the choice of BRFs, from using a standardized format \cite{Mendoza2006}, to automatically generating all possible BRFs compatible with the PKN and calibrating the rules in order to fit perturbation data \cite{SaezRodriguez2009, Videla2016, Gjerga2020}.
State-of-the-art approaches involve the automated construction of large-scale logical networks by inferring the logical rules from the topology and semantics of molecular interaction maps \cite{Aghamiri2020}.

Regardless of how a logical model is constructed, it has been shown in practice that expert curation, i.e. the manual fine-tuning of the logical rules to fit experimental data, can result in highly predictive models \cite{Flobak2015, Niederdorfer2020}, yet this is not trivially obtained with automatically constructed networks \cite{Bekkar2018}.
Because of the large function space complemented with a sparsity of observations and inherent noise in existing data, there is a wide range of plausible BRFs.
Thus, it is crucial to properly define function characteristics that can guide the modeler to a more informed function choice.
Our work is focused on explicating some of these metrics and using them to show for example which BRFs can be discarded due to biological inconsistencies with the underlying regulatory topology and which are biased towards specific Boolean outcomes.

The paper is structured as follows: Section \ref{sec:background} provides a list of notations and definitions to be used later in the text.
In Section \ref{sec:DNF}, we discuss the benefits of using the equivalent disjunctive normal form of a Boolean function to delineate its biological interpretability.
In Section \ref{sec:consistency}, we provide a set of properties that characterize the Boolean functions that are consistent with a given regulatory topology and show that several functions under study violate them.
In Section \ref{sec:truth-density}, we present the truth density metric as a means to evaluate if a Boolean function is biased or balanced with increasing number of regulators.
We also discuss the asymptotic properties of different functions relating to the ratio between activators and inhibitors.
Lastly, in Section \ref{sec:network-analysis}, we present evidence that the standardized Boolean functions are indeed biased and show how modelers can exploit such information for their own benefit.
The results are demonstrated in Boolean models derived from a cancer signaling network as well as from scale-free topologies that are applicable to most biological networks.
We close the paper with some discussion points in Section \ref{sec:discuss} and directions for future research in Section \ref{sec:future}.

\section{Background \label{sec:background}}

\subsection{Boolean regulatory functions}

\textit{Boolean regulatory functions} (BRFs) are Boolean functions used in the context of biological networks and modeling.
A mathematical description of such a function associates the activity output of a target biological entity with the Boolean input values of $n$ variables (the \textit{regulators}), such that $f_{BRF}:\{0,1\}^n \rightarrow \{0,1\}$.
Thus, the target's output state is binary, i.e. either 0 ($False$, denoting an inactive or inhibited state) or $1$ ($True$, indicating an active state).

One intuitive representation of a Boolean function is its \textit{truth table}, which is a list of all possible Boolean input configurations of the $n$ regulators along with their associated function output.
Since every regulator can be assigned two possible values ($0$ and $1$), the total number of input configurations (i.e. rows) in a truth table is $2^n$.
For example, a Boolean function $f(x_1,x_2,x_3)$ with $3$ regulators has a total of $2^3=8$ rows in its corresponding truth table, starting from the input configuration $(0,0,0)$ and ending with $(1,1,1)$ (Table \ref{tab:table1}).

The total number of BRFs with $n$ regulators is $2^{2^n}$ since for each of the $2^n$ input configurations (i.e. rows of the truth table) there can be two possible function outcomes ($0$ or $1$).
For example, with $3$ regulators and a total of $8$ rows in the truth table, that would be a total of $2^8=256$ functions, three of which are shown in Table \ref{tab:table1}.

\subsection{Disjunctive normal form}

The most frequently used form of a Boolean function is its analytical expression, where variables are connected with logical operators such as AND ($\land$), OR ($\lor$), NOT ($\lnot$), XOR ($\oplus$), etc. and the output of the function is calculated using basic Boolean algebra.
In Table \ref{tab:table1} for example, we provide the analytical forms for the functions $f_1$ and $f_2$.
Note that there can be multiple analytical forms that essentially compute the same function, e.g. another form of the $f_1$ function is $f_1^{'}=(\lnot x_1 \land x_2 \land \lnot x_3) \lor (x_1 \land \lnot x_2 \land \lnot x_3) \lor (x_1 \land x_2 \land \lnot x_3)$.

\vspace{20pt}
\begin{table}[h!]
  \centering
  \begin{tabular}{cccccc}
    \toprule
    \multicolumn{3}{c}{Truth Table} & \multicolumn{3}{c}{Boolean functions} \\
    \cmidrule(r){1-3}
    \cmidrule(r){4-6}
    $x_1$ & $x_2$ & $x_3$ & $f_1=(x_1 \land \lnot x_3) \lor (x_2 \land \lnot x_3)$ & $f_2=x_1 \lor (\lnot x_2 \land \lnot x_3)$ & $f_3 = 1$ \\
    \midrule
    0&0&0&0&1&1 \\
    0&0&1&0&0&1 \\
    0&1&0&1&0&1 \\
    0&1&1&0&0&1 \\
    1&0&0&1&1&1 \\
    1&0&1&0&1&1 \\
    1&1&0&1&1&1 \\
    1&1&1&0&1&1 \\
    \bottomrule
  \end{tabular}
  \bigskip
  \caption{Truth table of three Boolean functions with three input variables $x_1, x_2$ and $x_3$. Functions $f_1$ and $f_2$ are expressed in disjunctive normal form (DNF) with the minimum possible number of terms. $f_3$ is a tautology.}
  \label{tab:table1}
\end{table}

This brings us to the notion of a general form which could be used to define useful metrics common to all Boolean functions (e.g. complexity), as well as the need to provide minimal forms based on specific criteria.
For example, a more compact function form enhances readability, which can be seen by comparing $f_1$ with $f_1^{'}$.

Every Boolean function can be represented in a \textit{disjunctive normal form} (DNF), requiring only AND ($\land$), OR ($\lor$) and NOT ($\lnot$) operators as building blocks.
In such a representation, \textit{literals}, which are variables (e.g. positive literal $x$) or their logical negations (e.g. negative literal $\text{NOT }x$), are connected by AND's, producing \textit{terms}, which are then in turn connected by OR's \cite{crama2011boolean}.
For example, every function in Table \ref{tab:table1} is expressed in DNF, while the Boolean expressions $\lnot (x_1 \lor x_2)$ and $\lnot (x_1\land x_2)\lor x_3$ are not.
Note that a Boolean function can have multiple DNF formulations.

\subsection{Link operator functions}

We consider the class of BRFs that partitions the input regulators to two sets: the set of positive regulators (\textit{activators}) and the set of negative regulators (\textit{inhibitors}).
Let $f$ be such a Boolean function $f_{BRF}(x,y):\{0,1\}^n \rightarrow \{0,1\}$, with $m \ge 1$ activators $x=\{x_i\}_{i=1}^{m}$ and $k \ge 1$ inhibitors $y=\{y_j\}_{j=1}^{k}$, that is a total of $n = m + k$ regulators.
The \textit{link operator} BRFs have an analytical formula which places the two distinct types of regulators in two separate expressions, while connecting them with a special logical operator that we call a \textit{link operator}.
An example of such a function that has been used extensively in the logical modeling literature is the standardized BRF formula with the “AND-NOT” link operator \cite{Mendoza2006}:

\begin{equation} \label{eq:and-not}
    f_{AND-NOT}(x,y) = \left(\bigvee_{i=1}^{m} x_i\right) \land \lnot \left(\bigvee_{j=1}^{k} y_j\right)
\end{equation}

A variation of the above function is the “OR-NOT” link operator function:

\begin{equation} \label{eq:or-not}
    f_{OR-NOT}(x,y) = \left(\bigvee_{i=1}^{m} x_i\right) \lor \lnot \left(\bigvee_{j=1}^{k} y_j\right)  
\end{equation}

Note that the presence of the link operator is what forces the condition $m,k \ge 1$ (at least one regulator in each category).
For the rest of this work, we will not consider BRFs with only one type of regulator, since these can be represented by simple logical functions without loss of biological consistency.
Following the notation introduced in Mendoza et al. \cite{Mendoza2006}, in the case of only positive regulators, the presence of at least one activator makes the target active, i.e. $f(x)=\bigvee_{i=1}^{m} x_i$.
In the case of only inhibitory regulators, the presence of at least one inhibitor is sufficient to make the target inactive, i.e. $f(y)=\lnot \bigvee_{j=1}^{k} y_j=\bigwedge_{j=1}^{k} \lnot y_j$.

Borrowing notation from circuit theory, we will also use other link operators like the “NAND”, “NOR”, “XNOR” gates, with or without the “NOT” symbol in front.
Note that the logical operator used to connect the same type of regulators (e.g. the activators) is usually OR, but other operators could be used as well.

Another link operator function that we will consider in this work is the “Pairs” function:

\begin{equation} \label{eq:pairs}
    f_{Pairs}(x,y) = \bigvee_{\forall (i,j)}^{m,k}(x_i\land \lnot y_j)
\end{equation}

The intuition behind the name is derived from the fact that the function will return $True$ if there is at least one pair of regulators consisting of a present activator and an absent inhibitor.
For a formulation of the “Pairs” function that is consistent with the link operator terminology as defined above, see (Eq. \ref{eq:pairs-cnf}).

\newpage
\subsection{Threshold functions}

\textit{Threshold functions} are a special type of Boolean functions, the output of which depends on the condition that the sum of (possibly weighted) activities of the input regulators surpasses a given \textit{threshold} value \cite{McCulloch1943, Hopfield1982}.

In this work we will consider two simple threshold functions, which both output $True$ when the number of present activators is larger than the number of present inhibitors.
As such, the activities of the positive and negative regulators are combined in an \textit{additive} manner, with their respective assigned weights set to $\pm1$ and the threshold parameter to $0$, formulating thus a \textit{majority rule} which defines the value of the function \cite{Bornholdt2008, Chaouiya2013}.
These functions differ with regards to their output when there is balance between the activities of the positive and negative regulators: the first outputs $1$ (the activators “win”) while the second outputs $0$ (the inhibitors “win”):

\begin{align} \label{eq:thres-act}
    f_{Act-win}(x,y) &= \begin{cases}
    1, & \sum_{i=1}^{m} x_i \ge \sum_{j=1}^{k} y_j \\
    0, & \text{otherwise} \end{cases}
    \intertext{}
    f_{Inh-win}(x,y) &= \begin{cases} \label{eq:thres-inh}
    1, & \sum_{i=1}^{m} x_i > \sum_{j=1}^{k} y_j \\
    0, & \text{otherwise} \end{cases}
\end{align}

\section{Disjunctive Normal Form unmasks biological interpretation} \label{sec:DNF}

\subsection{Interpretability issues in Boolean modeling}

Two main features make Boolean modeling attractive to users. 
First, transforming conditions for the activation or inhibition of a target biological entity to Boolean equations is a relatively easy task using a qualitative, logic-based modeling formalism. 
Second, the reverse is also true, i.e. Boolean equations can be more interpretable and closer to a simplified description of biological reality that “makes sense” than the use of other kinds of formalisms (e.g. kinetic modeling). 
For example, consider the simple case of a target entity, which is regulated by one positive regulator $x_1$ and one negative regulator $y_1$. 
The use of the “AND-NOT” link operator function in this case (Eq. \ref{eq:and-not}) is very easy to understand and interpret since the formula directly connects to the underlying biology.
Thus, the mathematical formulation is simply written as $f_{AND-NOT}=x_1 \text{ AND NOT } y_1$, while the modeler reads “the target becomes active when $x_1$ (the activator) is present and $y_1$ (the inhibitor) absent”.

Issues start arising when considering the \textit{interpretability} of such Boolean expressions in cases where a larger number of regulators act on a target, e.g. in a more complex scenario with three positive ($x_1,x_2,x_3$) and three negative ($y_1,y_2,y_3$) regulators, the mathematical formulation expressing the target's activity output can be easily written using the link operator function form, as $f_{AND-NOT}=(x_1 \text{ OR }x_2\text{ OR }x_3) \text{ AND NOT } (y_1\text{ OR }y_2\text{ OR }y_3)$.
A modeler could read this as “the target becomes active when at least one activator is present, and all of its inhibitory regulators are absent”, but a precise semantic description that explicates the conditions under which the target gets activated, can in general be difficult to assess.
A similar issue arises when reflecting on the use of a different link operator instead of the standard “AND-NOT” or even of an entirely different regulatory function, for which the biological interpretation might be difficult to derive from the expression itself.

\renewcommand{\arraystretch}{1.4} 
\begin{table}[H] 
  \scriptsize
  \resizebox{\textwidth}{!}{
  \begin{tabular}{|M{5cm}|M{4.7cm}|M{3.4cm}|M{1.4cm}|c|}
  \hline
     \thead{BRF (standard form)} & 
     \thead{BRF (CDNF)} & 
     \thead{Biological Interpretation} & 
     \thead{Consistent} & 
     \thead{Complexity} \\
     \hline
        $(x_1 \text{ OR }x_2)\textbf{ NOR }(y_1 \text{ OR }y_2)$ & 
        $\text{NOT }x_1\text{ AND NOT }x_2\text{ AND}$ \newline $\text{NOT }y_1\text{ AND NOT }y_2$ \hphantom{mmmm} & 
        Absence of all regulators & 
        \cellcolor[HTML]{FF4136} NO & 
        $1$ (always) \\
    \hline
        $(x_1 \text{ OR }x_2)\textbf{ NAND }(y_1 \text{ OR }y_2)$ & 
        $(\text{NOT }x_1\text{ AND NOT }x_2)\textbf{ OR}$ \newline $(\text{NOT }y_1\text{ AND NOT }y_2)$ \hphantom{mmm} &
        Absence of all activators \textbf{or} absence of all inhibitors &
        \cellcolor[HTML]{FF4136} NO & 
        $2$ (always) \\
    \hline
        $(x_1 \text{ OR }x_2)\textbf{ AND NOT }(y_1 \text{ OR }y_2)$ \newline \newline “AND-NOT” (Eq. \ref{eq:and-not}) & 
        $(x_1\text{ AND NOT }y_1\text{ AND NOT }y_2)\textbf{ OR}$ \newline $(x_2\text{ AND NOT }y_1\text{ AND NOT }y_2)$ \hphantom{mm} &
        Presence of at least one activator \textbf{and} absence of all inhibitors &
        \cellcolor[HTML]{2ECC40} YES & 
        $2$ ($m$) \\
    \hline
        $(x_1 \text{ OR }x_2)\textbf{ NOR NOT }(y_1 \text{ OR }y_2)$ & 
        $(y_1\text{ AND NOT }x_1\text{ AND NOT }x_2)\textbf{ OR}$ \newline $(y_2\text{ AND NOT }x_1\text{ AND NOT }x_2)$ \hphantom{mm} &
        Presence of at least one inhibitor \textbf{and}  absence of all activators &
        \cellcolor[HTML]{FF4136} NO & 
        $2$ ($k$) \\
    \hline
        $(x_1 \text{ OR }x_2)\textbf{ OR NOT }(y_1 \text{ OR }y_2)$ \newline \newline “OR-NOT” (Eq. \ref{eq:or-not}) &
        \hphantom{mm} $x_1\textbf{ OR }x_2\textbf{ OR}$ \newline $(\text{NOT }y_1\text{ AND}$ $\text{NOT }y_2)$ &
        Presence of any activator \textbf{or} absence of all inhibitors &
        \cellcolor[HTML]{2ECC40} YES &
        $3$ ($m+1$) \\
    \hline
        $(x_1 \text{ OR }x_2)\textbf{ NAND NOT }(y_1 \text{ OR }y_2)$ & 
        \hphantom{mm} $y_1\textbf{ OR }y_2\textbf{ OR}$ \newline $(\text{NOT }x_1\text{ AND NOT }x_2)$ &
        Presence of any inhibitor \textbf{or} absence of all activators &
        \cellcolor[HTML]{FF4136} NO & 
        $3$ ($k+1$) \\
    \hline
        $(x_1 \text{ OR }x_2)\textbf{ XOR }(y_1 \text{ OR }y_2)$ & 
        $(x_1\text{ AND NOT }y_1\text{ AND NOT }y_2)\textbf{ OR}$ \newline $(x_2\text{ AND NOT }y_1\text{ AND NOT }y_2)\textbf{ OR}$ \newline $(\text{NOT }x_1\text{ AND NOT }x_2\text{ AND }y_1)\textbf{ OR}$ \newline $(\text{NOT }x_1\text{ AND NOT }x_2\text{ AND }y_2)$ \hphantom{mm} &
        Presence of at least one activator and absence of all inhibitors \textbf{or} presence of at least one inhibitor and absence of all activators &
        \cellcolor[HTML]{FF4136} NO & 
        $4$ ($m+k$) \\
    \hline
        $(x_1 \text{ OR }x_2)\text{ AND }(\text{NOT }y_1\text{ OR NOT }y_2)$ \newline \newline “Pairs” (Eq. \ref{eq:pairs}) & 
        $(x_1\text{ AND NOT }y_1)\textbf{ OR }$ \newline $(x_1\text{ AND NOT }y_2)\textbf{ OR }$ \newline $(x_2\text{ AND NOT }y_1)\textbf{ OR }$ \newline $(x_2\text{ AND NOT }y_2)$ \hphantom{mmmmm} &
        Presence of at least one activator \textbf{and} absence of at least one inhibitor &
        \cellcolor[HTML]{2ECC40} YES &
        $4$ ($m \times k$) \\
    \hline
        $(x_1 \text{ OR }x_2)\textbf{ XNOR }(y_1 \text{ OR }y_2)$ &
        $(x_1\text{ AND }y_1)\textbf{ OR }$ \hphantom{mm} \newline $(x_1\text{ AND }y_2)\textbf{ OR }$ \hphantom{mm} \newline $(x_2\text{ AND }y_1)\textbf{ OR }$ \hphantom{mm} \newline $(x_2\text{ AND }y_2)\textbf{ OR }$ \hphantom{mm} \newline \hphantom{m} $(\text{NOT }x_1\text{ AND NOT }x_2\text{ AND}$ \newline $\text{NOT }y_1\text{ AND NOT }y_2)$ \hphantom{mm} &
        Presence of at least one activator and inhibitor pair \textbf{or} absence of all regulators &
        \cellcolor[HTML]{FF4136} NO &
        $5$ ($m \times k + 1$) \\
    \hline
        $True$ when $x_1+x_2 > y_1+y_2$ \newline \newline “Inh-win” (Eq. \ref{eq:thres-inh}) &
        $(x_1\text{ AND }x_2\text{ AND NOT }y_1)\textbf{ OR}$ \hphantom{mm} \newline $(x_1\text{ AND }x_2\text{ AND NOT }y_2)\textbf{ OR}$ \hphantom{mm} \newline $(x_1\text{ AND NOT }y_1\text{ AND NOT }y_2)\textbf{ OR}$ \newline $(x_2\text{ AND NOT }y_1\text{ AND NOT }y_2)$ \hphantom{mm} &
        Number of present activators is larger than the number of present inhibitors &
        \cellcolor[HTML]{2ECC40} YES &
        4 \\
    \hline
        $True$ when $x_1+x_2 \ge y_1+y_2$ \newline \newline “Act-win” (Eq. \ref{eq:thres-act}) & 
        $(x_1\text{ AND }x_2)\textbf{ OR}$ \hphantom{mmm} \newline $(x_1\text{ AND NOT }y_1)\textbf{ OR}$ \newline $(x_1\text{ AND NOT }y_2)\textbf{ OR}$ \newline $(x_2\text{ AND NOT }y_1)\textbf{ OR}$ \newline $(x_2\text{ AND NOT }y_2)$ \hphantom{mmmmm} &
        Number of present activators is larger than or equal to the number of present inhibitors &
        \cellcolor[HTML]{2ECC40} YES &
        5 \\
    \hline
  \end{tabular}}
  \bigskip
  \caption{Several Boolean regulatory functions with four regulators ($m=2$ positive $\{x_1,x_2\}$, $k=2$ negative $\{y_1,y_2\}$) and some metrics are presented. 
  The two first columns provide two different function forms: a standard one, i.e. either the link operator form distinguishing activating and inhibiting regulators or a simple description in the case of the threshold functions, and the CDNF which is a special case of DNF (Section \ref{subsec:consistency}). 
  The “Biological Interpretation” states in words the conditions that make a BRF become $True$, and is explicitly translated from the terms in the corresponding CDNF.
  The “Consistent” column states if the functions satisfy the properties 1-3 from Section \ref{subsec:consistency} (YES, green-colored) or there are inconsistencies with the underlying regulatory structure (NO, red-colored), i.e. if an activator (resp. inhibitor) appears as a negative (resp. positive) literal in the corresponding CDNF.
  The functions are sorted according to an increasing complexity metric (“Complexity” column), which is the number of terms in each respective, minimum-length CDNF expression.
  In parentheses we provide the generalized formula for the number of CDNF terms of the link operator functions with $m$ activators and $k$ inhibitors.}
  \label{tab:table2}
\end{table}

\subsection{DNF links to biological semantics}

We argue here that the DNF is the most adequate function form to help us address the aforementioned issues.
Every Boolean regulatory function expressed in DNF, has a biological characterization that is directly derived from the formula itself: each term in the DNF is an activation condition, i.e. a list of regulators, some present (the positive literals) and some absent (the negative literals), which, when combined, make the target (output of the function) active.
Further merging of all the conditions using OR-semantics into a description of how the regulators influence the target’s output, facilitates the biological interpretation of any Boolean regulatory function.

In Table \ref{tab:table2}, we show a list of BRFs with two positive and two negative regulators. 
Most of the BRFs presented have a different link operator separating the activators from the inhibitors.
Using the functions standard expressions (1st column) makes it very hard to derive a meaningful biological characterization as expressed in the 3rd column of Table \ref{tab:table2}. 
For example, defining a meaningful description of the “NOR” or “NAND-NOT” equations using only their standard expression, is a very difficult task. 
In contrast, by using the equivalent DNFs (2nd column) we can make an explicit, “1-1” correspondence between mathematical formulation and biological interpretation and use it to compare the different functions’ meanings.
Thus, by expressing the “NAND-NOT” equation in DNF, we can precisely identify the conditions that make the outcome of the function $True$ and translate these into a meaningful description such as “Presence of any inhibitor or absence of all activators”. 
Consequently, we are led to a generalized and independent of the number of regulators description of this link operator function. 
Such a description is intuitive to human interpretation and reasoning, in terms of the function’s applicability, e.g. in comparing the “AND-NOT” and “NAND-NOT” biological interpretations, we see that the first is semantically plausible while the second completely contradicts the underlying biology.

\section{Characterizing consistent regulatory functions} \label{sec:consistency}

\subsection{The 3 consistency properties} \label{subsec:consistency}

As observed in Table \ref{tab:table2}, not only can the DNF be used to uncover the biological interpretation of any BRF and subsequently help determine its plausibility, but it also provides a means to compare the different function meanings. 
Still, we need a more refined, technical description that is able to express the implausibility of the “NAND-NOT” or “NOR” cases directly from their mathematical formulas, and which would be applicable to every BRF. 
We define the \textit{consistency} attribute of a BRF to describe its compliance with the underlying regulatory network structure.

The first step in making a Boolean model is to build a graph (PKN), assembling the regulatory entities of interest from various databases or the scientific literature, and use causality information to connect them through their regulatory action on other entities. 
As such, a network structure can be defined, in which entities can regulate (either positively or negatively) some of the other entities. 
Using such a simple network-driven formalization, we define a set of three properties that describe the set of all the \textit{consistent Boolean regulatory functions}, i.e. the functions that comply with the underlying regulatory structure. 
So, for a consistent BRF, the following propositions are satisfied \cite{Cury2019}:

\begin{enumerate}
    \item Its regulators can be partitioned into two disjoint sets: the set of \textit{activators} (positive regulators, enhance target’s activity) and the set of \textit{inhibitors} (negative regulators, suppress target’s activity).
    This stems from the fact that every interaction in the PKN has a fixed sign (either positive or negative).
    As such, there are no dual regulations, i.e. a regulator cannot activate and inhibit a target at the same time.
    This property essentially makes the set of consistent BRFs a subset of the \textit{monotone} Boolean functions \cite{crama2011boolean}.
    \item All regulators are \textit{essential}: for every regulatory input, inverting their values, will also, in at least one configuration of states of other regulators, change the output of the function.
    This means that all regulators are indispensable for deriving the target’s activity output.
    \item A consistent BRF can be represented in a unique \textit{complete} DNF (CDNF) which is also known as Blake’s Canonical Form \cite{Blake1937}.
    This is a consequence of property (1), since monotone Boolean functions expressed in any DNF, can be further simplified by removing redundant literals, resulting in the equivalent unique CDNF expression \cite{crama2011boolean}. 
    This property is really important since it allows us to identify which regulatory entities are activators and which are inhibitors from the corresponding CDNF expression of a consistent BRF: an activator will always appear as a positive literal, whereas an inhibitor will always appear as a negative literal.
\end{enumerate}

We provide an example to delineate the difference between the DNF and CDNF forms and show violations of the consistency properties.
In Table \ref{tab:table3}, we present three Boolean functions, expressing the output of a target regulated by one activator ($x_1$) and one inhibitor ($y_1$).
The functions $f_2$ and $f_3$ are in CDNF whereas $f_1$ is in normal DNF, since the positive regulator $x_1$ appears both as a positive and a negative literal (i.e. it acts as a dual regulator, making $f_1$ inconsistent). 
Notice that $f_1$ reduces to $f_2$ by removing the redundant negative literal ($\lnot x_1$) in the term $(\lnot x_1 \land y_1)$: $y_1$ “absorbs” the larger term and thus a shorter expression manifests, one that covers more $True$ outcomes (i.e. $1$'s) in the truth table. 
In addition, we observe that $f_2$ is inconsistent, since inhibitor $y_1$ appears as a positive literal. 
On the other hand, using a negative literal for inhibitor $y_1$ and a positive one for activator $x_1$, makes $f_3$ consistent.

\begin{table}[h!]
  \centering
  \begin{tabular}{cccccc}
    \toprule
    \multicolumn{2}{c}{Truth Table} & \multicolumn{1}{c}{Term} & \multicolumn{3}{c}{Boolean functions} \\
    \cmidrule(r){1-2}
    \cmidrule(r){3-3}
    \cmidrule(r){4-6}
    $x_1$ & $y_1$ & $(\lnot x_1 \land y_1) $ & $f_1=x_1 \lor (\lnot x_1 \land y_1)$ & $f_2=x_1 \lor y_1$ & $f_3=x_1 \lor \lnot y_1 $ \\
    \midrule
    0&0&0&0&0&1 \\
    0&1&1&1&1&0 \\
    1&0&0&1&1&1 \\
    1&1&0&1&1&1 \\
    \bottomrule
  \end{tabular}
  \bigskip
  \caption{Truth table of three different Boolean regulatory functions with two input regulators, one positive ($x_1$) and one negative ($y_1$). All functions are expressed in DNF. $f_1$ and $f_2$ result in the same target Boolean output, with $f_2$ expressed in CDNF. Activator $x_1$ regulates the target both positively and negatively in $f_1$, making the function non-monotone and thus inconsistent. Inhibitor $y_1$ is a positive literal in $f_2$’s CDNF, making it inconsistent as well. Function $f_3$ is consistent since it’s written in CDNF with the activator $x_1$ and inhibitor $y_1$ appearing as positive and negative literals respectively.}
  \label{tab:table3}
\end{table}

\subsection{Most link operator functions are inconsistent}
 
Examining Table \ref{tab:table2}, we note that the 2nd column presents not just any DNF expression of the studied Boolean regulatory functions, but precisely the CDNF. 
Thus we can immediately identify which BRFs violate at least one of the three properties discussed in Section \ref{subsec:consistency} and are therefore inconsistent with the regulatory topology (this information is presented in the 4th column, labeled “Consistent”). 
Two examples of such inconsistencies include the “NAND-NOT” and “XOR” link operator functions, which have terms in their corresponding CDNF in which an activator $x_i$ appears as a negative literal ($\text{NOT }x_i$) and an inhibitor $y_j$ as a positive literal (as itself). 
In total, from all the BRFs presented in Table \ref{tab:table2}, only the standardized “AND-NOT”, the “OR-NOT”, the “Pairs” and the two threshold functions respect the underlying regulatory topology, as can be verified by examining their respective CDNFs.
The rest of the link operator functions presented are inconsistent and will not be considered for further analysis in this paper.

\section{Truth Density as a measure of expected function output} \label{sec:truth-density}

In this section we present another interesting Boolean function metric, whose properties can be used to add further knowledge about a Boolean function’s behavior. 
This metric, which we call \textit{truth density}, allows us to project what the regulatory target’s output will most likely be when the number of input regulators changes and investigate how the ratio between activators and inhibitors may affect that output.
From a modeler’s perspective, this metric is useful to check if an assigned model parameterization (i.e. use of a specific BRF) can asymptotically predefine the activity state of some targets.
Equipped with this knowledge, a modeler can verify the degree of fitness with the observations that such a parameterization allows, and thus discard a specific function in favor of another, if the latter has a truth density value that better matches the outcome observed in the data.

\subsection{Truth Density}
 
We define the truth density ($TD$) of a Boolean function as the fraction of all input configurations in its corresponding truth table that yield a $True$ ($1$) outcome. 
As such, $TD \in [0,1]$. 
This quantity was first introduced in \cite{kauffman1993} and more recently in \cite{Gherardi2016} under the name of \textit{bias} and was similarly defined as the probability that a Boolean function takes on the value $1$. 
Using the example with the three Boolean functions from Table \ref{tab:table1}, we have $TD_{f_1}=3/8=0.375$, $TD_{f_2}=5/8=0.625$ and $TD_{f_3}=8/8=1$, where the last function is a tautology, with the maximum possible truth density. 
Colloquially, we can say that a Boolean function is \textit{biased}, when it’s truth density is close to $0$ or $1$. 
Since the size of a truth table grows exponentially with the number of inputs of the Boolean function ($n$ inputs correspond to $2^n$ rows), the existence of bias conveys the information that most of the input regulatory configurations result in either an activated or inhibited target (bias towards $1$ or $0$ respectively). 
On the other hand, we shall say that a Boolean function is \textit{balanced}, if it takes on the values $0$ and $1$ equally often, or equivalently, it’s truth density is approximately centered around $1/2$ \cite{Benjamini2005}.

\subsection{Asymptotic truth density results of the consistent regulatory functions}

In Appendix \ref{appx:a}, we present a list of propositions and proofs that provide the exact truth density formulas for the generic forms of the five consistent BRFs we studied in previous sections, namely the “AND-NOT”, “OR-NOT” and “Pairs” link operator functions, and the two threshold functions, “Act-win” and “Inh-win”.
A very important element that enables the straightforward derivation of these formulas, is the use of the equivalent DNF expressions in the proofs, especially for the case of the link operator Boolean functions. 
We also noted that the truth densities of all the aforementioned BRFs depend on two variables: the number of activators and the number of inhibitors (the total number of regulators also appears as a separate variable but it depends on the first two, i.e. it is just their sum). 
Thus, we logically asked if a BRF’s truth density asymptotically tends towards specific values in the $[0,1]$ interval (e.g. the function could be biased or balanced), when the number of its input regulators increases or the ratio between activators and inhibitors changes. 
The results of the asymptotic behavior of the truth density formulas are analytically presented in Appendix \ref{appx:b}.

The asymptotic analysis of the truth density formulas confirmed the intuitive perception that the link operator “AND-NOT” and “OR-NOT” functions show a characteristically opposite behavior with increasing number of regulators: the standardized “AND-NOT” formula depends only on the number of inhibitors and its output tends towards $0$, whereas the “OR-NOT” formula depends only on the number of activators and is biased towards $1$.
On the other hand, the “Pairs” and threshold functions truth densities don't have an asymptotic limit since they depend on both the number of activators and inhibitors.
Therefore, we proceeded in clarifying the role of the \textit{activator-to-inhibitor} ratio by investigating three scenarios which explicitly reveal the functions truth density behavior for a significantly large number of regulators:

\begin{itemize}
    \item A $1:1$ activator-to-inhibitor ratio, where approximately half of the regulators are activators and half are inhibitors.
    \item A high activator-to-inhibitor ratio, where all regulators are activators except one inhibitor.
    \item A low activator-to-inhibitor ratio, where all regulators are inhibitors except one activator.
\end{itemize}

In the $1:1$ ratio scenario, where there is an equal number of activators and inhibitors, the asymptotic behavior of the “AND-NOT” and “OR-NOT” functions corresponds to absolute inhibition ($0$) and activation ($1$) respectively, following the biased behavior shown previously.
The “Pairs” function behaves similarly to the “OR-NOT” function and therefore is also biased towards $1$.
Only the threshold functions show balanced behavior with their truth density value reaching asymptotically $1/2$, since the majority rule does favor neither activators nor inhibitors in this scenario.
On the other hand, in the two extremely unbalanced scenarios, where one set of regulators completely outweighs the other, the asymptotic truth density results of the “AND-NOT” and “OR-NOT” functions depend on each respective scenario.
Specifically, when the inhibitors dominate over the activators, the “OR-NOT” is balanced and the “AND-NOT” is biased, since the former has been shown to depend exclusively on the number of activators (which is just one in this case) for increasingly more regulators, whereas the latter on the number of inhibitors.
Their behavior is reversed when the activators outbalance the inhibitors.
In contrast, the “Pairs” function behaves in a balanced manner, having a truth density asymptotically equal to $1/2$ in both these scenarios, since the single minority regulator is paired with every regulator from the dominant group in the respective DNF expression (Eq. \ref{eq:pairs}) and as a result, it significantly influences the function’s output.
Lastly, the asymptotic results for the threshold functions follow the larger size regulatory group, being biased towards $0$ with significantly more inhibitors and biased towards $1$ with significantly more activators.

\subsection{Validation of asymptotic behavior}

One key issue of immense practical importance for the modeler, which arises when analyzing the asymptotic behavior of the truth density formulas, is the actual number of regulators that effectively make each of the studied functions exhibit the demonstrated behavior.
We noticed that most of the truth density formulas (Eq. \ref{td-asymp:and-not}, \ref{td-asymp:or-not} and \ref{td-asymp:pairs}) are the sum of two to three terms, with only one of them depending exclusively on the number of regulators $n$.
Also, this term is usually $1/2^n$, and can be omitted when considering values larger than $n = 10$ regulators since it’s insignificant ($1/2^{10} \approx 0.001$).
This suggests that the limit value of the truth density formulas may be already derived from a much smaller number of regulators than what is implied by the study of asymptotes.
Therefore, we need to have a more data-centric view of the results from our previous asymptotics analysis of the different BRFs, one that will enable us to verify the mathematically observed behaviors but also identify an approximate range for the number of regulators where the asymptotics decide the outcome of the studied functions.

We generated the complete truth tables for the five consistent BRFs of Table \ref{tab:table2}, from $2$ up to $20$ regulators, accounting for every possible activator-to-inhibitor ratio.
For example, for $n = 10$ regulators, every combination of at least one activator and one inhibitor that adds up to $10$ ($1$ activator + $9$ inhibitors, $2$ activators + $8$ inhibitors, etc.) resulted in a different truth table for each considered Boolean function.
Subsequently, using the generated truth tables, we could easily calculate the exact truth density value for each function at every considered ratio.
The results are shown in Figures \ref{fig:1A} and \ref{fig:1B} for the link operator and threshold functions, respectively.

\clearpage

\begin{figure}[!ht]
    \centering
    
    \begin{subfigure}{0.72\textwidth}
        \centering
        \caption{} \label{fig:1A}
        \includegraphics[width=\textwidth]{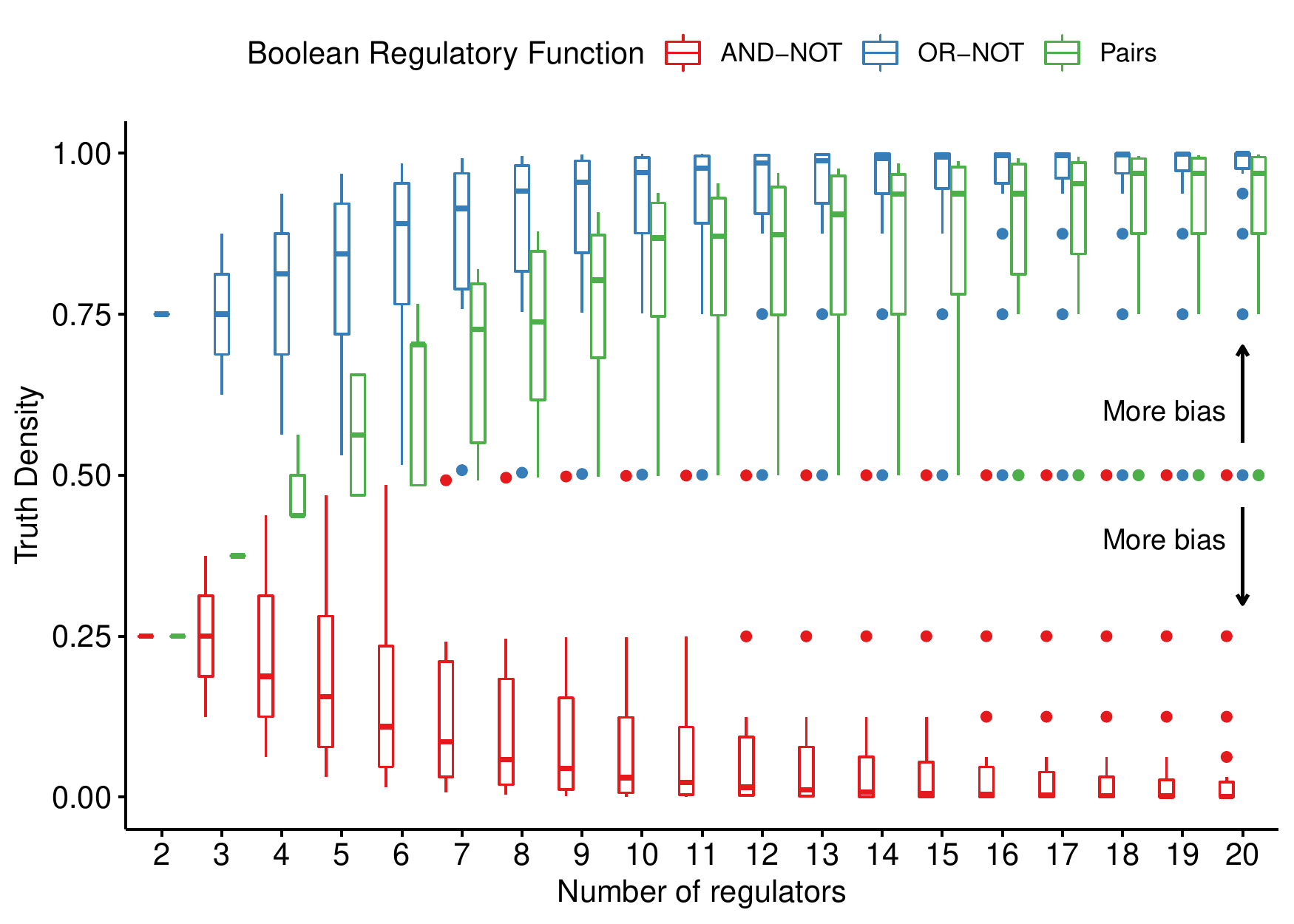}
    \end{subfigure}
    
    \begin{subfigure}{0.72\textwidth}
        \centering
        \caption{} \label{fig:1B}
        \includegraphics[width=\textwidth]{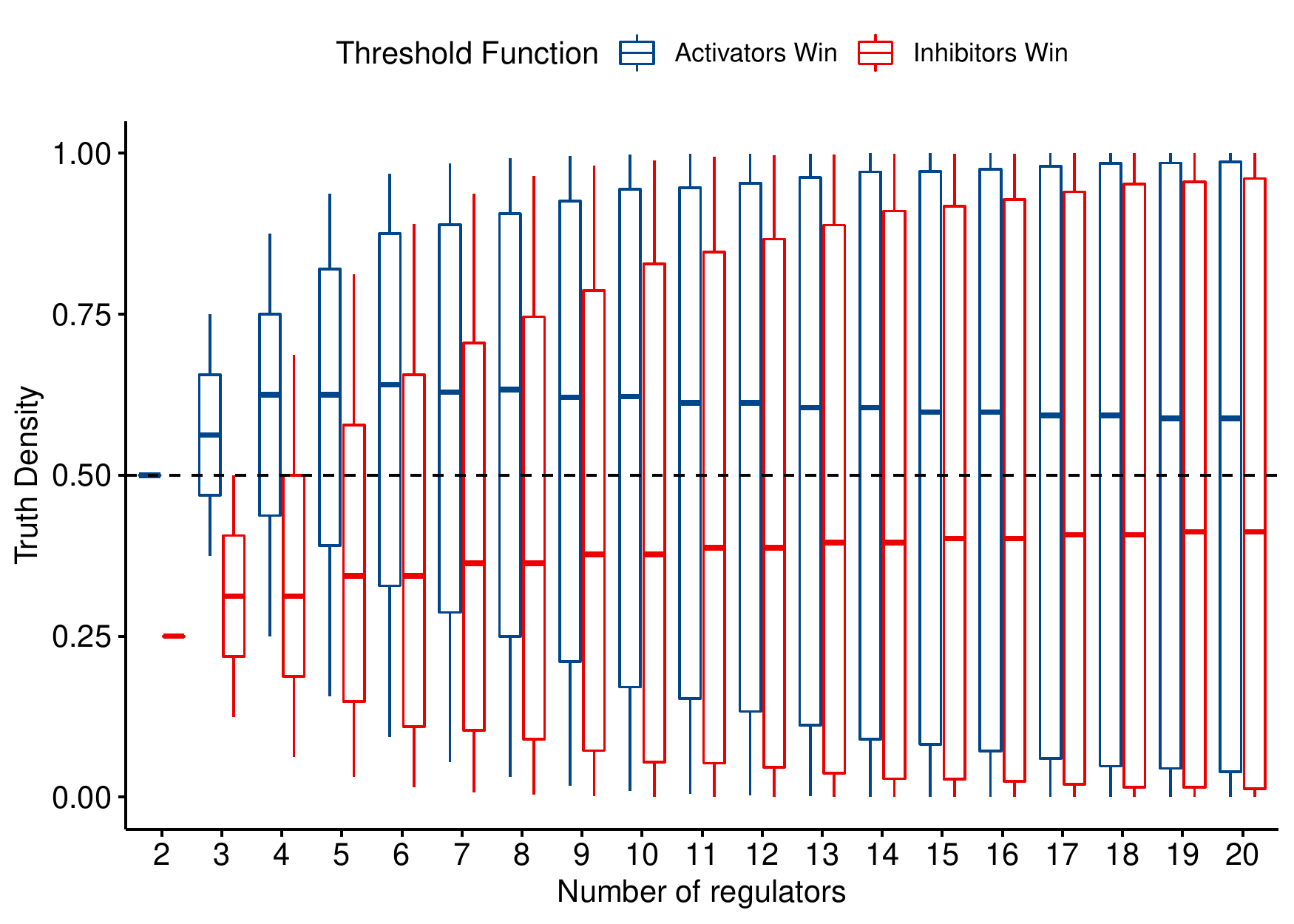}
    \end{subfigure}
    
    \caption{Comparing the truth densities of five different Boolean regulatory functions for different numbers of regulators and activator-to-inhibitor ratios. For each specific number of regulators, every possible combination of at least one activator and one inhibitor that add up to that number, results in a different truth table output with its corresponding truth density value. All such possible configurations up to $20$ regulators are shown. (\textbf{A}) The standardized “AND-NOT” function, along with the “OR-NOT” and “Pairs” functions, show an increasingly biased behavior with more regulators. (\textbf{B}) The two threshold functions “Act-win” and “Inh-win” show a more balanced behavior, since they respect the activator-to-inhibitor ratio and thus demonstrate a larger spectrum of possible truth density values even for higher numbers of regulators.}
    \label{fig:1}
\end{figure}

The data in general shows that the different regulatory functions demonstrate quite dissimilar behaviors with regard to their asymptotic outcome.
In particular, we recapitulate the findings from the asymptotics analysis, namely the bias of the link operator functions, which is evident even from $7$ to $10$ input regulators.
Interestingly, the “Pairs” function follows asymptotically the behavior of the “OR-NOT” function but is in general less biased.
We note that the outliers in Figure \ref{fig:1A} with truth density values closer to $1/2$, represent imbalanced activator-to-inhibitor ratio scenarios, i.e. either considerably more activators than inhibitors for the “AND-NOT” function and the reverse for the “OR-NOT” function, or any imbalanced ratio for the “Pairs” function.
Lastly, Figure \ref{fig:1B} shows that the threshold functions exhibit a more balanced behavior, expressed as a higher spectrum of truth density values for any single number of regulators and with the median truth density asymptotically reaching $1/2$.
This result is due to the fact that threshold functions faithfully follow the activator-to-inhibitor ratio, i.e. with more activators the outcome is biased towards $1$ whereas with more inhibitors the function outcome tends towards $0$.

\section{Link operator parameterization determines activity state in biological networks} \label{sec:network-analysis}

In this section we investigate if a model’s parameterization can effectively decide the activity state of nodes in biological networks.
In more detail, we will use the “AND-NOT” link operator function \cite{Mendoza2006} and its symmetric function “OR-NOT” (Eq. \ref{eq:and-not} and \ref{eq:or-not}), to build Boolean models from prior causal knowledge and check if their activity state profile as determined by dynamic attractor analysis, shows the biased behavior that we observed in Section \ref{sec:truth-density}.

A major motivation for this analysis is the fact that the “AND-NOT” function is extensively used by logical modelers and thus the knowledge of its bias, made possible through the lens of the truth density metric, should be clearly demonstrated in practical use cases, e.g. biological network targets should mostly be in an inhibited state when the “AND-NOT” parameterization is used in their respective Boolean equations and in an active state in the case of the “OR-NOT”.
As such, a modeler could make use of the link operator function bias to select the appropriate model parameterization which statistically guarantees an activity state profile that best matches the one supported by experimental evidence.

\subsection{From topology to link operator Boolean models}

In order to define Boolean models with the “AND-NOT” and “OR-NOT” link operator parameterization forms, we implemented the software \textit{abmlog}, which stands for “\textbf{A}ll possible \textbf{B}oolean \textbf{M}odels \textbf{L}ink \textbf{O}perator \textbf{G}enerator” (\nameref{sec:soft}).
Given a simple interaction (.sif) format file \cite{Toure2020}, representing a PKN with clearly defined, positive and negative causal interactions, the abmlog software outputs all combinatorially possible Boolean models where each link operator equation (deciding the state of a \textit{link operator node}, i.e. one whose Boolean activity state is determined by both positive and negative regulators) will have either the “AND-NOT” or the “OR-NOT” function form.
The models are saved in both the widely-used BoolNet (.bnet) \cite{Mussel2010} format and the gitsbe format \cite{gitsbe-format}, with the latter additionally including the attractors of the Boolean model, calculated via the BioLQM Java library \cite{Naldi2018}.
A simple overview of the software is presented in Figure \ref{fig:2}.

By default, abmlog generates all possible Boolean models with the two link operator parameterizations, the number of which depends on the number of link operator nodes.
For example, if a network has $12$ nodes with both activating and inhibitory regulators, then a total of $2^{12} = 4096$ Boolean models will be generated.
In case the number of all possible Boolean models is very large or space restrictions do not allow the storage of that many models, the software can also be used to generate a random sample of link operator Boolean models from the total parameterization space.
In summary, abmlog is a useful tool that can generate a large pool of Boolean models for subsequent analyses, each with a unique link operator parameterization.

\begin{figure}
    \centering
    \includegraphics{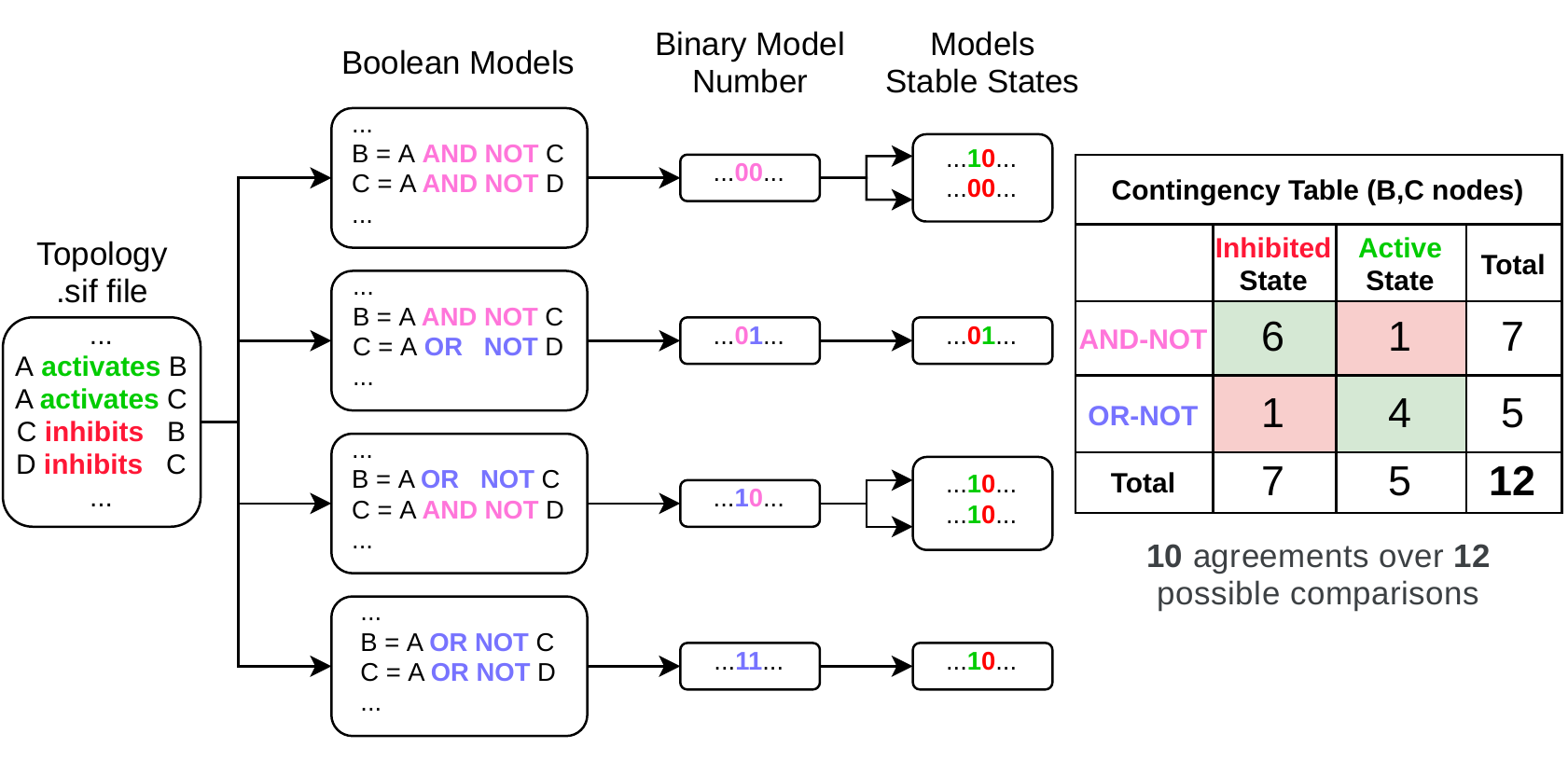}
    \caption{Data-flow overview diagram of the abmlog software and its related contingency table between output model parameterization and stable state activity. A simple interaction file is given as an input to produce a series of Boolean models where equations with both activating and inhibitory regulators have either the “AND-NOT” or the “OR-NOT” formulation. Two link operator equations give rise to a total of $2^2 = 4$ different Boolean models. Each unique parameterization can be represented by a single binary model number, where a “0” corresponds to an equation with the “AND-NOT” link operator and a “1” to an equation with the “OR-NOT”. This representation of parameterization can be directly compared to each of the models' stable states, which enables the creation of a contingency table for the data pertaining to nodes B and C and the derivation of measures of agreement (see Section \ref{subsec:agree}).}
    \label{fig:2}
\end{figure}

\subsection{Measuring agreement between parameterization and stable state} \label{subsec:agree} 

In order to quantify the link operator function bias, we use measures of agreement between parameterization and stable state.
The idea is that the more biased the link operator parameterization is, the higher the expected agreement will be between a target node's link operator assignment and its corresponding stable state.
For the rest of this work, we shall use two measures of agreement, namely the \textit{percent agreement} and Cohen’s \textit{kappa statistic} \cite{Cohen1960}.

In more detail, using the Boolean model data generated by abmlog, we focus in two categorical variables related to a particular node of interest: its link operator parameterization (“AND-NOT”/“0” or “OR-NOT”/“1”) and its corresponding stable state activity (“inhibition” or “activation”), obtained via attractor analysis.
We shall say that these two variables “agree” when a node whose target Boolean equation has the “AND-NOT” link operator (resp. “OR-NOT”) ends up with an inhibited (resp. active) state in the corresponding attractor.
In the case of a Boolean model with multiple attractors, each of the stable states is used separately to measure the agreement between the two aforementioned variables, since the activity of a node might change between the different attractors, but its parameterization stays the same.

To define measures of agreement between the two proposed categorical variables, we visualize their interrelation using a contingency table. 
A total of four data comparison counts can be used to fill in the table's cells: two where the parameterization and stable state match (i.e. node had the “AND-NOT” link operator form and an inhibited stable state or the “OR-NOT” form and an active state) and two where they differ (i.e. node had the “OR-NOT” form and its state was inhibited, or the “AND-NOT” form and an active state).
The percent agreement is then simply defined as the total number of matches divided by the total number of comparisons and is directly interpreted as the percentage of data that the two variables agree upon.
In the example of Figure \ref{fig:2}, the corresponding contingency table counts all the matches and mismatches between the link operator assignments for nodes B and C and their corresponding activity state ($12$ comparisons in total).
Since there are only two mismatches, the percent agreement is equal to $10/12 = 0.83$, meaning that in $83\%$ of the presented data, the link operator parameterization dictated function outcome.
Naturally, a value of $0$ is the absolute minimum score and indicates complete disagreement between the two variables while a perfect agreement score is equal to $1$ or $100\%$.

A more robust statistic that we also apply in the Boolean model data is Cohen’s kappa $(\kappa)$ coefficient \cite{Cohen1960}.
This statistic is used to measure the extent to which data collectors (raters) assign the same score to the same variable (inter-rater reliability) and takes into account the possibility of agreement occurring by chance.
In our case, this can be conceived as one rater that assigns link operator parameterization (“AND-NOT” or “OR-NOT”) and another that assigns stable state activity (“inhibition” or “activation”).
Both variables are converted to a binary outcome ($0$ or $1$), allowing the creation of a contingency table and subsequently the calculation of Cohen’s formula for $\kappa$.
The kappa statistic ranges from $-1$ to $+1$, where a value of $0$ represents the amount of agreement that can be expected from random chance, and a value of $1$ (resp. $-1$) indicates perfect agreement (resp. disagreement) between the raters.
In the example contingency table of Figure \ref{fig:2}, $\kappa = 0.657$, which is a considerable reduction in the level of congruence compared to the $0.83$ percent agreement.

\subsection{Truth Density bias in biological networks}

\subsubsection{Bias guides model parameterization in a cancer signaling network}

We used abmlog on a cancer signaling network, consisting of $77$ nodes and a total of $149$ curated causal interactions that cover a variety of pathways linked to prosurvival and antisurvival cell signaling (e.g. cyclin expression and caspase activation).
This PKN, named CASCADE (\textbf{CA}ncer \textbf{S}ignaling \textbf{CA}usality \textbf{D}atabas\textbf{E}), was successfully used to build a Boolean model able to predict anti-cancer drug combination effects in gastric cell lines \cite{Flobak2015}.
We used the CASCADE version from the Flobak paper (version CASCADE 1.0), with some node naming changes for compatibility with the newest versions \cite{cascade2020}.
The number of nodes with both activating and inhibiting regulators in the CASCADE 1.0 topology is $23$, while the rest of the nodes have regulators that belong to only one of the two regulatory categories.
Thus, using abmlog, we generated all $2^{23}$ possible Boolean models with the “AND-NOT” and “OR-NOT” link operator parameterizations. 
The resulting stable state distribution across all produced models is presented in Figure \ref{fig:3A}.
For our subsequent analysis we will use only the \num{2802224} Boolean models that had exactly one stable state, as it makes the calculation of agreement between a node’s assigned link operator and its corresponding activity state across all the selected models more straightforward.

The agreement results between link operator parameterization and stable state activity across all the selected CASCADE models are presented in Figures \ref{fig:3B} (percent agreement, per node) and \ref{fig:3C} (Cohen’s $\kappa$, nodes with the same number of regulators are grouped together).
The percent agreement results show a high variability across the link operator nodes and range from a minimum of $53\%$ to a perfect agreement ($100\%$).
This suggests that for all nodes, across any selected CASCADE 1.0 Boolean model, there is a higher than random probability that the assignment of the “AND-NOT” (resp. “OR-NOT”) link operator formula in the associated Boolean equations will result in the inhibition (resp. activation) of the target nodes.
So, even though none of the nodes have more than $5$ regulators, we already start seeing signs of the truth density bias in the link operator regulatory functions across a wide range of Boolean models.

When applying Cohen’s $\kappa$ to evaluate level of agreement, we chose a conservative threshold equal to $0.6$, corresponding empirically to a substantial level of agreement \cite{Landis1977, McHugh2012}.
We found that $60\%$ ($14$ out of $23$) of the nodes have a $\kappa$ value below the specified threshold.
Our conclusion is that biological networks with higher in-degree nodes (i.e. more than $7-10$ regulators) are needed to properly assess if there is a truly high level of agreement between Boolean parameterization and function state outcome in the case of the link operator regulatory functions, providing thus conclusive proof of their bias (Section \ref{subsec:scale-free}).

\begin{figure}[!ht]
    \centering
    
    \makebox[\linewidth][c]{ 
        \begin{subfigure}{0.55\textwidth}
            \caption{} \label{fig:3A}
            \includegraphics[width=\textwidth]{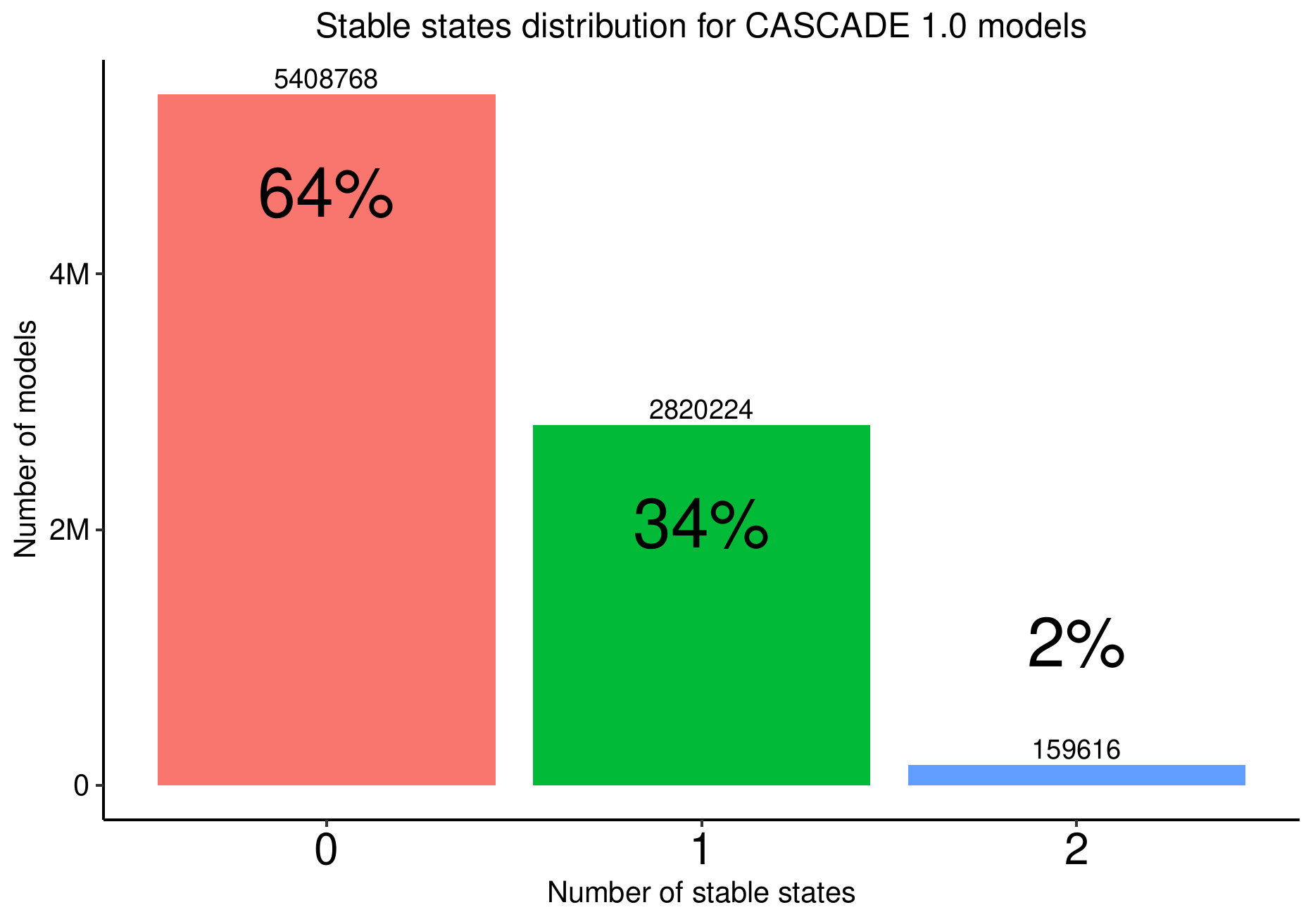}
        \end{subfigure}
        \hfill
        \begin{subfigure}{0.55\textwidth}
            \caption{} \label{fig:3B}
            \includegraphics[width=\textwidth]{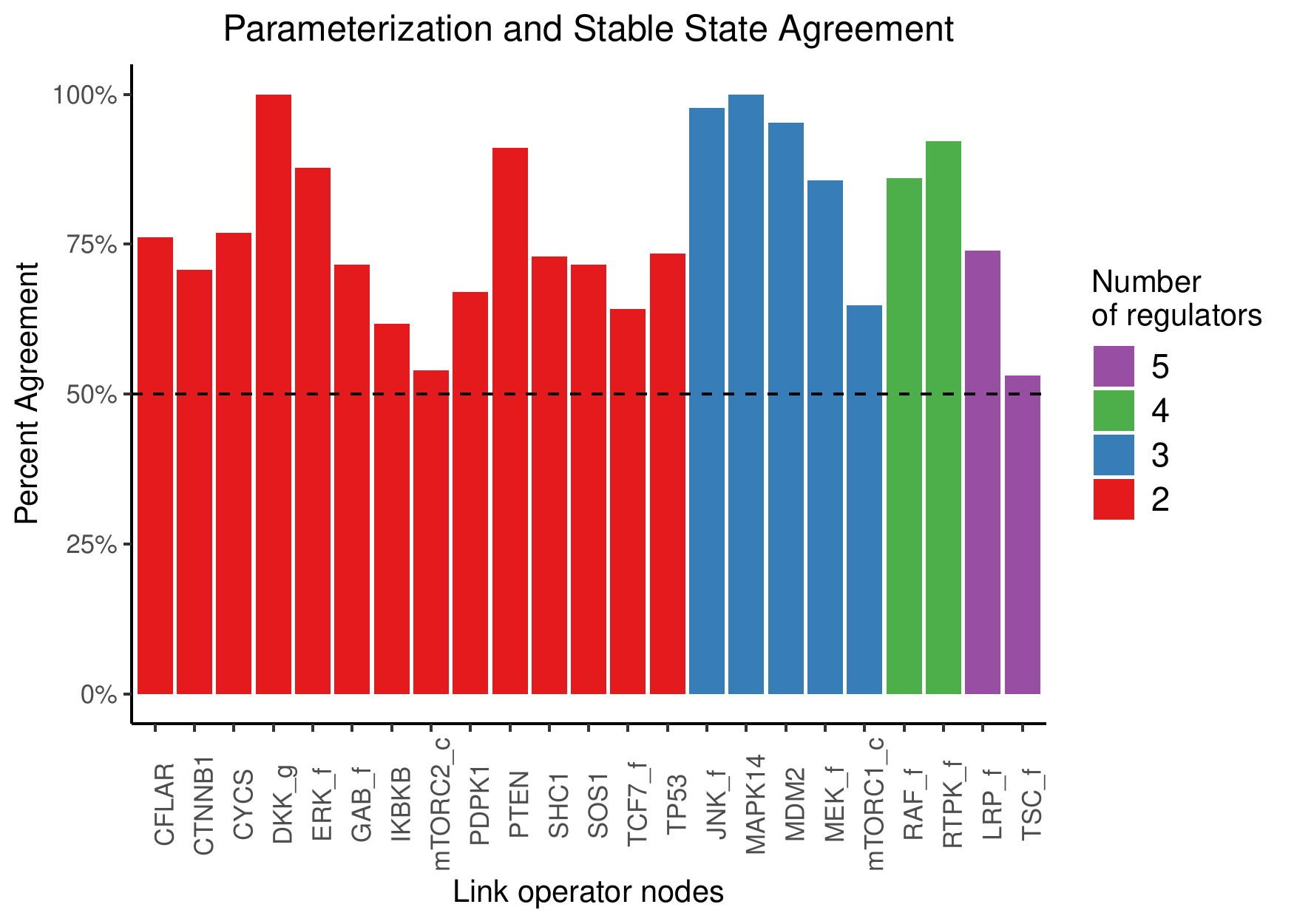}
        \end{subfigure}
    }
    
    \vskip\baselineskip
    
    \makebox[\linewidth][c]{
        \begin{subfigure}{0.55\textwidth}
            \caption{} \label{fig:3C}
            \includegraphics[width=\textwidth]{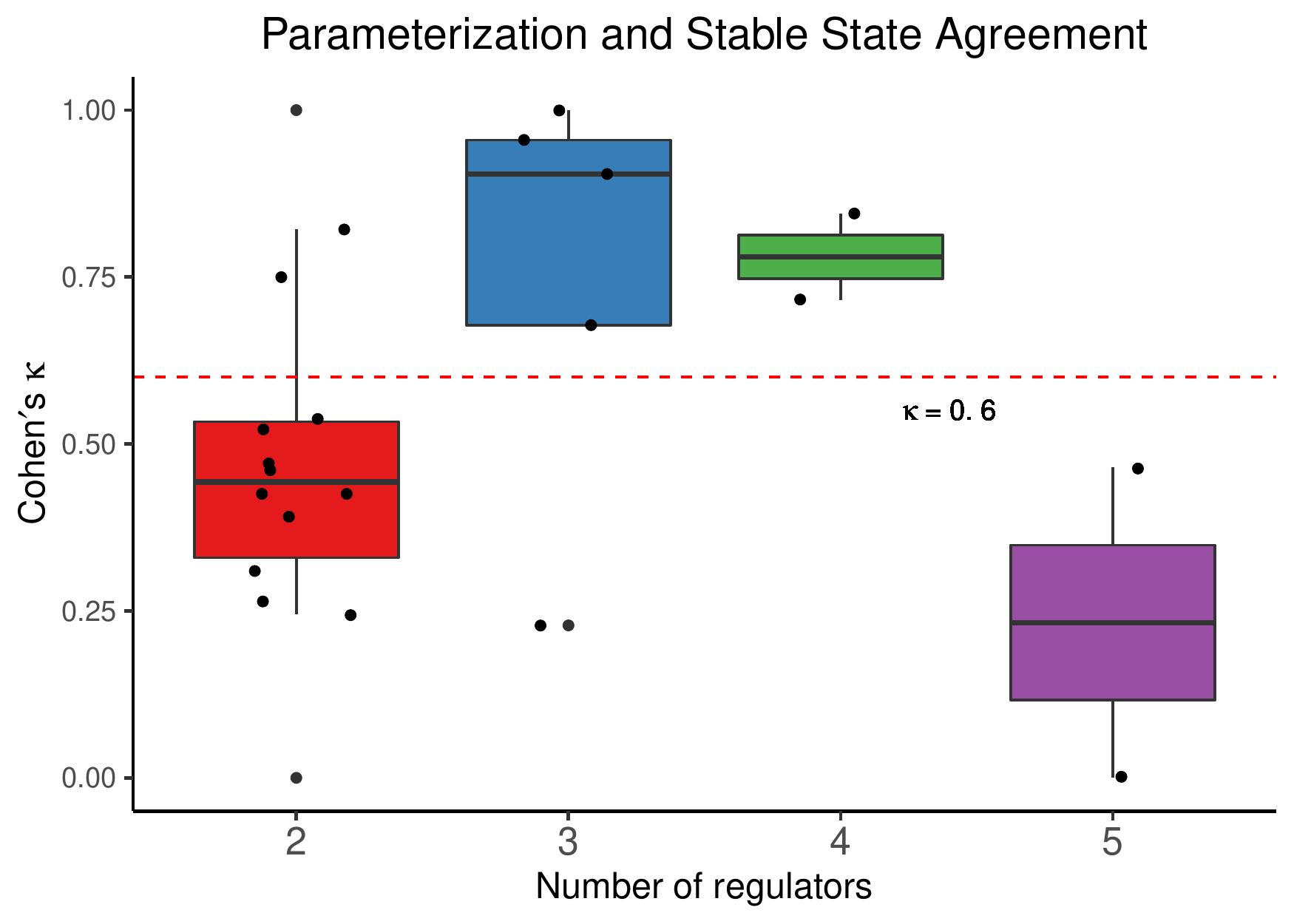}
        \end{subfigure}
        \hfill
        \begin{subfigure}{0.55\textwidth}
            \caption{} \label{fig:3D}
            \includegraphics[width=\textwidth]{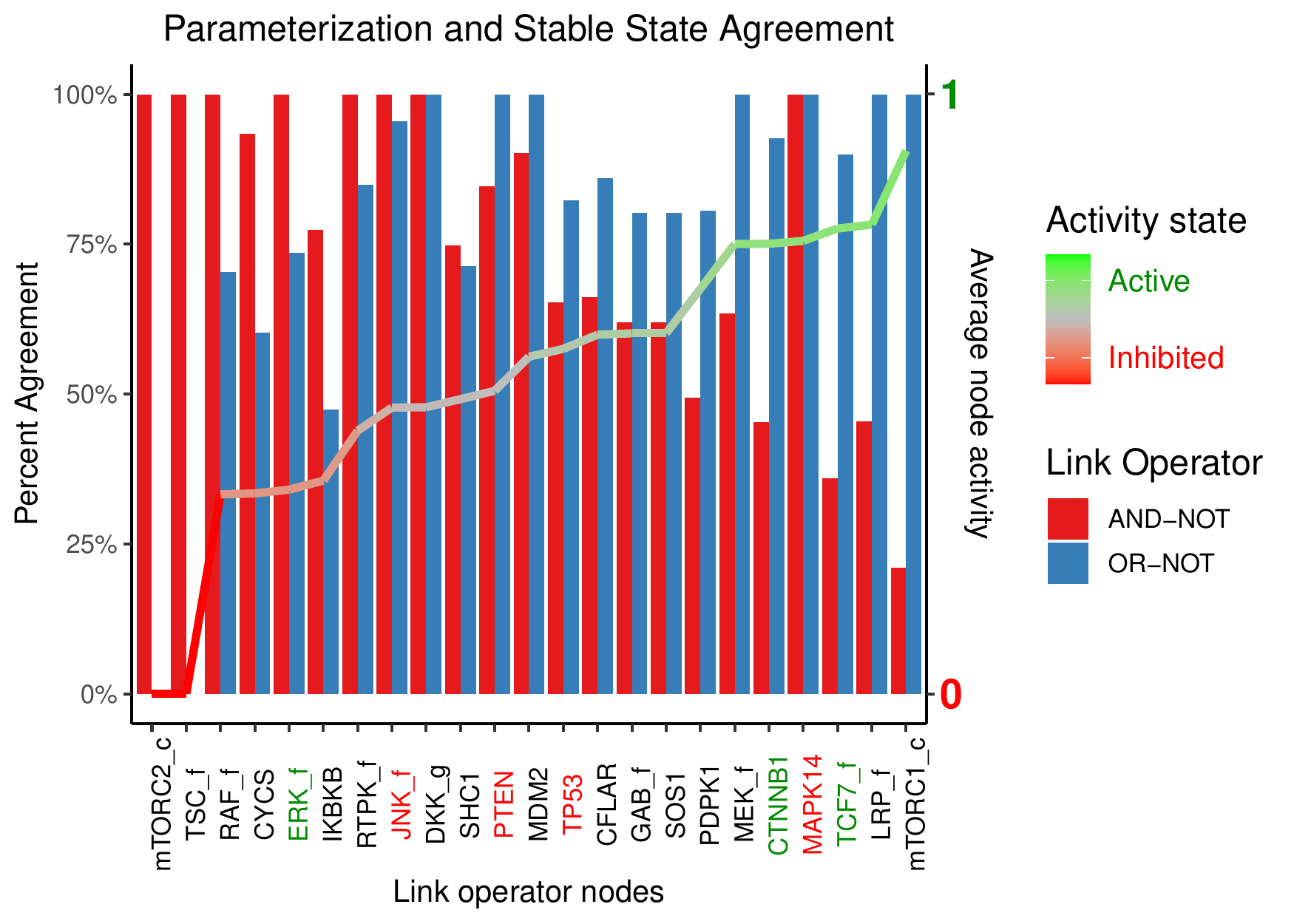}
        \end{subfigure}
    }
    
    \caption{(\textbf{A}) Stable states distribution across all link operator parameterized Boolean models generated by the abmlog software using the CASCADE 1.0 signaling topology. (\textbf{B}) Percent agreement scores between parameterization and activity state across all single stable state CASCADE 1.0 models, for $23$ nodes with both inhibiting and activating regulators. Nodes are sorted according to the total number of input regulators. (\textbf{C}) Same as (\textbf{B}), with the difference that the link operator nodes are now grouped into categories based on the total number of input regulators and Cohen’s $\kappa$ is used as an agreement statistic. (\textbf{D}) Same as (\textbf{B}), with the agreement now calculated as the proportion of matches between a node's link operator and its activity state, in the models that had the specific parameterization. The link operator nodes are sorted according to the average activity state across the considered CASCADE models and the colored node labels indicate literature curated activity profiles from Flobak et al. \cite{Flobak2015}} 
    \label{fig:3}
\end{figure}

Regardless of the presence of bias or not, the agreement results can be used to show how experimental data and topological regulatory knowledge (e.g. the activator-to-inhibitor ratio) can be coupled with the truth density metric to guide the choice of regulatory functions.
In one example scenario, a modeler asks what the most probable link operator parameterization is among the “AND-NOT” and “OR-NOT” forms that matches available experimental evidence. We used a literature curated activity profile derived for the AGS cell line from \cite{Flobak2015}, to annotate $7$ of the link operator nodes in Figure \ref{fig:3D} according to their experimentally validated state (activation or inhibition).
To clearly identify which of the two parameterizations best fits the observed data, for each node we split the CASCADE models in two model pools, representing the “AND-NOT” and “OR-NOT” node parameterizations, and calculated the proportion of models within each pool whose link operator matched the expected state outcome.
For example, in the contingency table of Figure \ref{fig:2}, the equivalent calculation would be to divide the number of matches in each row with the corresponding row total sum, resulting in $6/7 = 85.7\%$ of the “AND-NOT” Boolean equations with an inhibited stable state and $4/5 = 80\%$ of the “OR-NOT” equations with an active target node.
Moreover, the link operator nodes of Figure \ref{fig:3D} are sorted in increasing order by their average stable state activity in the considered CASCADE 1.0 Boolean models.
It is evident that nodes with higher average activity in the stable state have a higher agreement with the “OR-NOT” parameterization whereas nodes with lower average activity, a higher agreement with the “AND-NOT” parameterization ($0.85$ and $-0.74$ Pearson correlation coefficients with $p_{corr}^{\text{\tiny{OR-NOT}}}=2.6\times 10^{-7}$ and $p_{corr}^{\text{\tiny{AND-NOT}}}=5\times 10^{-5}$ respectively, see \nameref{sec:soft}).

More specifically, we observe that for all experimentally validated nodes, a modeler could a priori set the link operator to the appropriate form and get a stable state activation profile that matches the observations (“AND-NOT” to match an inhibition node profile or “OR-NOT” for an activation profile) with a higher probability than if he was randomly choosing one of the two.
For example, the data shows that 90\% of the models with an “OR-NOT” Boolean equation for the target family node \texttt{TCF7\_f}, had the node as active in their respective stable state.
The same is observed for the \texttt{CTNNB1} (92\%) and \texttt{ERK\_f} active nodes (74\%), as well as for the \texttt{TP53} (65\%) and \texttt{PTEN} (85\%) inhibited nodes with the choice of the “AND-NOT” parameterization.
Additionally, all the aforementioned nodes have two regulators (one activator and one inhibitor) and using the respective truth density formulas (Eq. \ref{td:and-not} and \ref{td:or-not}) with $n=2$ and $m=k=1$, we have that $TD_{AND-NOT} = 0.25$ (closer to $0$ or inhibition) and $TD_{OR-NOT} = 0.75$ (closer to $1$ or activation), as was also shown in Figure \ref{fig:1A}.
As such, the nodes observed output matches the statistically expected binary outcomes, showing that even with a low number of regulators, the BRF bias can be used to guide function choice.

In another scenario, a modeler knows that a particular node has a skewed activator-to-inhibitor ratio and wants to exploit such knowledge to make the node conform to a particular activity state of his choice.
A nice example from our data is the family node \texttt{LRP\_f}, with four activators and one inhibitor.
Using the truth density formulas for the two link operator parameterizations (Eq. \ref{td:and-not} and \ref{td:or-not}) with $n=5$, $m=4$ and $k=1$, we have that $TD_{AND-NOT} = 0.47$ and $TD_{OR-NOT} = 0.97$.
So, if the modeler wants to have an active \texttt{LRP\_f} in the stable state, the “OR-NOT” parameterization should be preferred since the “AND-NOT” has an approximate $50\%$ probability for this to happen from a statistical point of view.
These truth density values also match the results from Figure \ref{fig:3D}, since only half of the models that use the “AND-NOT” parameterization end up with an inhibited \texttt{LRP\_f} in the stable state while all of them have an active \texttt{LRP\_f} ($100\%$ agreement) in the case where the “OR-NOT” form is used.
Also, the average activity of \texttt{LRP\_f} across all models is one of the highest in the data, suggesting that imbalanced activator-to-inhibitor ratios could be a direct proxy for predicting regulation outcome.
In a similar situation, but at the other range of the activity spectrum, we have the \texttt{TSC\_f} family node with one activator and four inhibitors.
The truth density values (now using $n=5$, $m=1$ and $k=4$) are $TD_{AND-NOT} = 0.03$ and $TD_{OR-NOT} = 0.53$ respectively.
Therefore, the “AND-NOT” parameterization guarantees the inhibition of the \texttt{TSC\_f} node (data shows $100\%$ agreement) and it should be a modeler’s first choice if that is the desired outcome.
On the other hand, if the activation of \texttt{TSC\_f} was a modeler’s preference, then the choice of the “OR-NOT” form would be the most statistically appropriate according to the truth density metric.
We observe though that there was no model having \texttt{TSC\_f} inhibited in the stable stable, indicating that the complex dynamics of the cancer network can also play a significant role in the function outcome.
In general, we note that the particular configuration of activating and inhibiting regulators of a target in a specific model instance, can influence the dynamics attributable to the parameterization, causing several results from our analysis to differ from the expected behavior of the Boolean functions studied.

\subsubsection{Hub node bias in random scale-free networks} \label{subsec:scale-free}

\vspace{5pt}
In the previous section we showed that the truth density bias can be used to predict regulatory function outcome in a specific cancer signaling network, but the question still remains open for general biological networks.
Also, we found evidence suggesting that Boolean dynamics also plays a significant role in deciding each node’s state in the attractors and in some cases activity state results may contradict what is expected from the use and asymptotic interpretation of the truth density formulas.
Therefore, we now proceed to investigate if networks with higher in-degree nodes (i.e. more input regulators) have stable states that can be unquestionably decided a priori by the truth density metric, using the respective $TD$ formulas for the “AND-NOT” and “OR-NOT” link operator parameterizations.

We study the specific class of scale-free networks \cite{Barabasi1999}, based on the hypothesis that most biological networks exhibit that property, i.e. their node degree distribution follows asymptotically a power law $P(k)\sim k^{-\gamma}$, with $k$ the number of regulators and $\gamma$ the scale-free exponent. 
We note that the CASCADE 1.0 model also exhibits the scale-free property (see \nameref{sec:soft}) and there has been evidence in the literature both in favor and against this hypothesis.
In particular, earlier studies showed that many complex networks (including metabolic ones) are approximately scale-free \cite{Jeong2000, Wuchty2001, Albert2005, Aldana2007}, whereas more recent efforts demonstrated that not all cellular biological networks may share that property \cite{Khanin2006}, but those that do, exhibit the strongest level of evidence of scale-free structure \cite{Broido2019}.
Consequently, we shall use scale-free topologies as acceptable substitutes of real biological networks in our analysis.

The methodology is as follows: we start by generating scale-free topology files with a total of $50$ nodes each and a maximum in-degree $k_{max} = 50$ \cite{Mussel2010}.
For each network, the number of input regulators per node is drawn from a Riemann Zeta distribution with parameter $\gamma$ \cite{Aldana2003}. 
The choice of regulators for each network node, as well as the type of regulation (positive or negative), is uniformly random.
The Zeta distribution allows the creation of in-degree values that far exceed the average connectivity in a network, giving rise to the highest-degree nodes (often called “hubs”), which are the most defining characteristic of the scale-free networks.
The value of the scale-free exponent influences the number of hubs and their in-degree distribution.
More specifically, we created scale-free networks with $\gamma = 2$ and $\gamma = 2.5$, since most of the studied networks have an exponent  between $2$ and $3$ \cite{Aldana2003, Albert2002}.
Comparing the networks built with the above methodology, we found that those with $\gamma = 2$ have more nodes with both activating and inhibiting regulators and higher degree hubs than networks with $\gamma = 2.5$ (Figures \ref{fig:4A} and  \ref{fig:4B}).
These two characteristics suggest that the scale-free networks with $\gamma = 2$ are better suited for use with the abmlog software, since the larger the number of link operator nodes, the more Boolean models can be generated and thus more data comparisons can be made between node parameterization and stable state activity.
Additionally, the presence of higher degree hubs is the perfect testbed for the link operator function bias, which manifests especially for nodes with more than $7-10$ regulators, as we found from our earlier truth density asymptotics analysis (Figure \ref{fig:1A}).

\begin{figure}[!ht]
    \centering
    
    \makebox[\linewidth][c]{ 
        \begin{subfigure}{0.55\textwidth}
            \caption{} \label{fig:4A}
            \includegraphics[width=\textwidth]{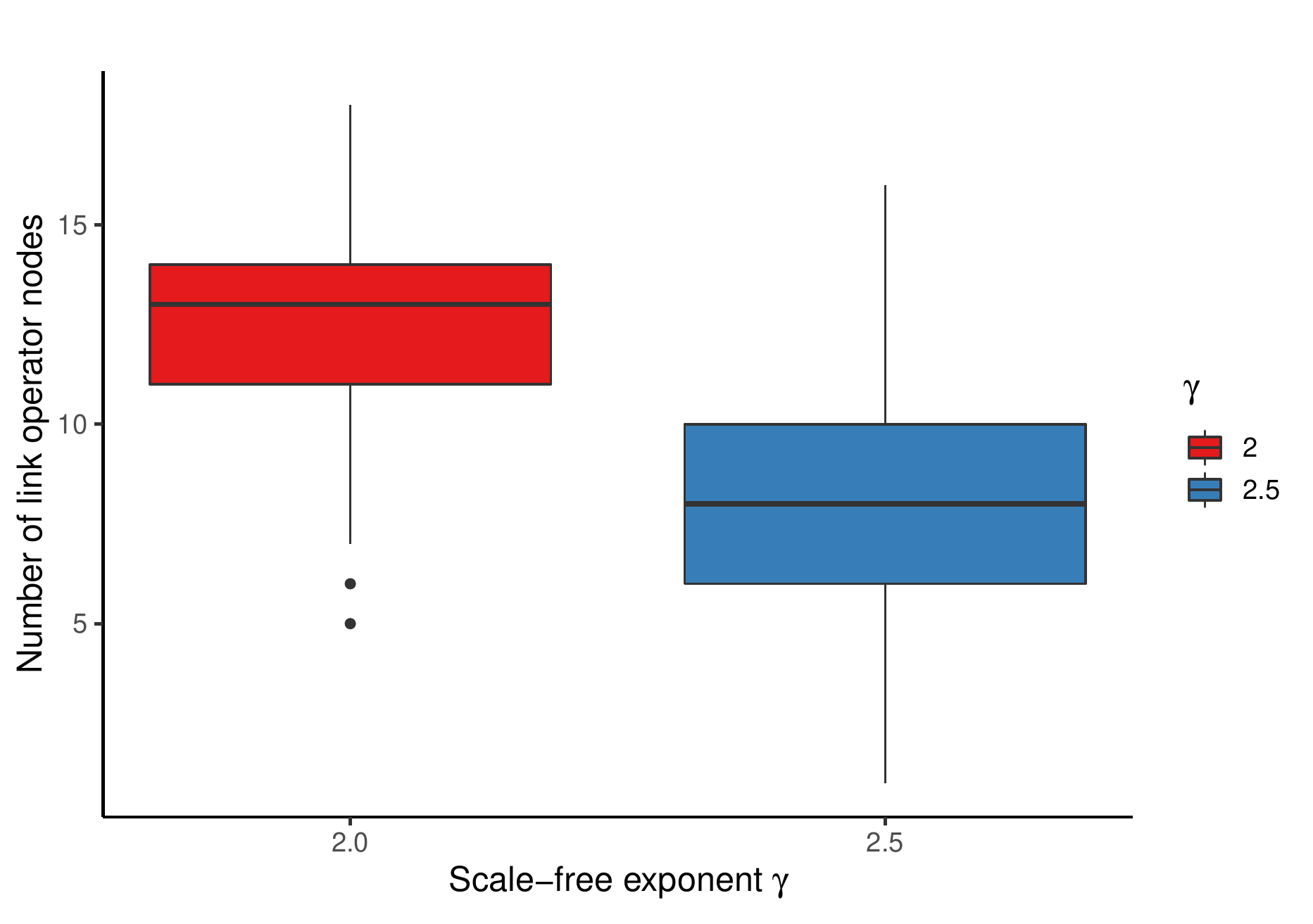}
        \end{subfigure}
        \hfill
        \begin{subfigure}{0.55\textwidth}
            \caption{} \label{fig:4B}
            \includegraphics[width=\textwidth]{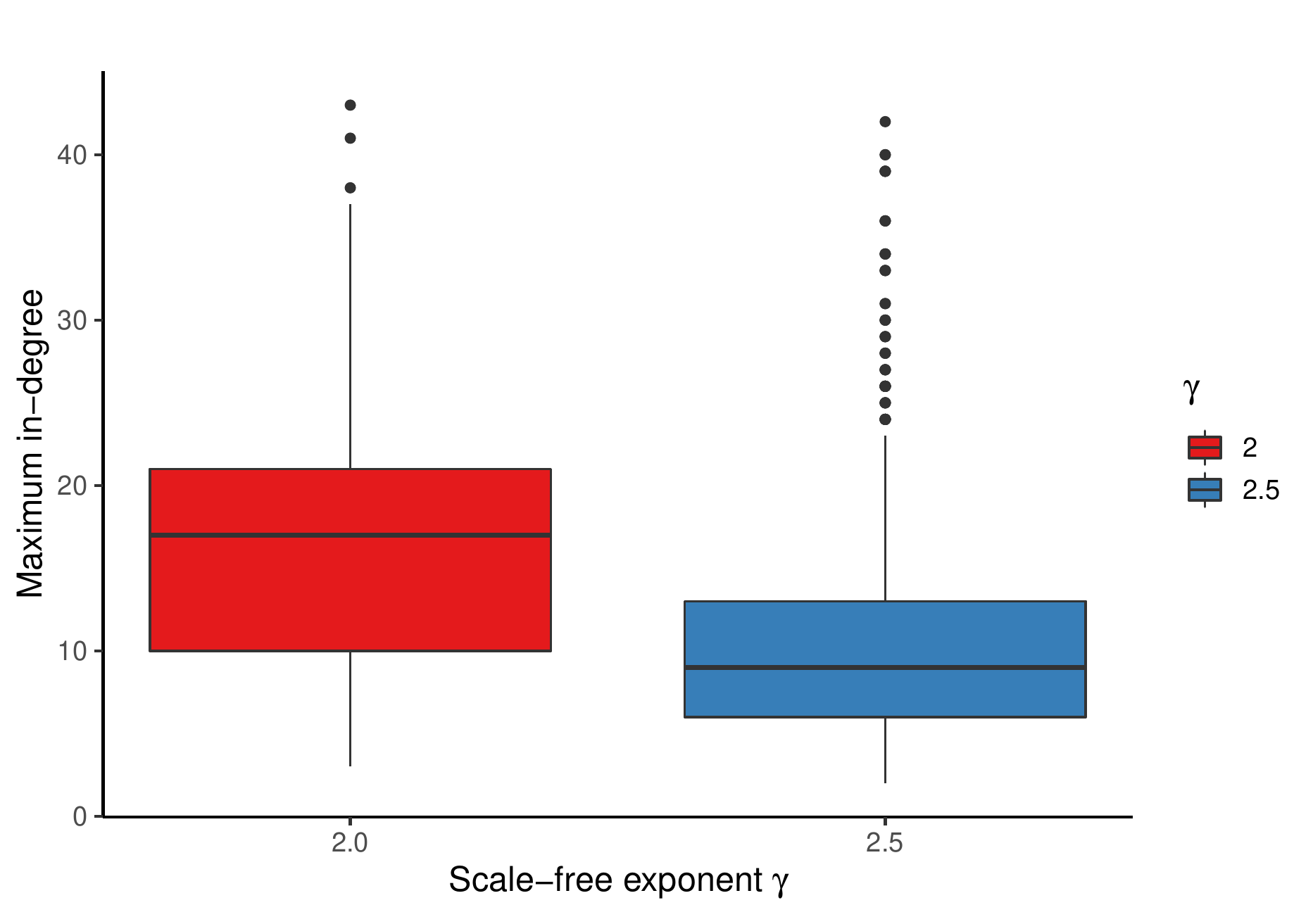}
        \end{subfigure}
    }
    
    \vskip\baselineskip
    
    \makebox[\linewidth][c]{
        \begin{subfigure}{0.55\textwidth}
            \caption{} \label{fig:4C}
            \includegraphics[width=\textwidth]{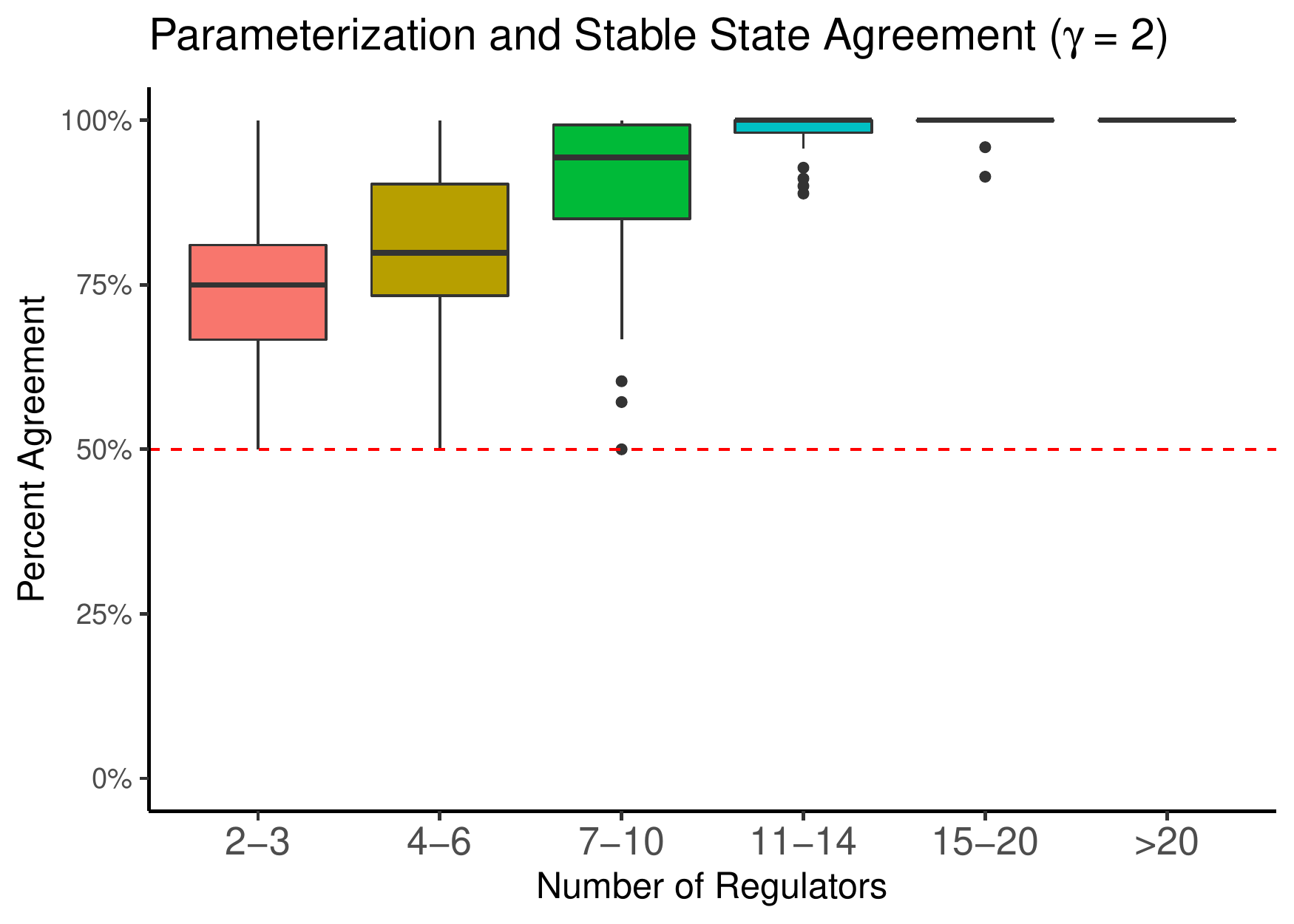}
        \end{subfigure}
        \hfill
        \begin{subfigure}{0.55\textwidth}
            \caption{} \label{fig:4D}
            \includegraphics[width=\textwidth]{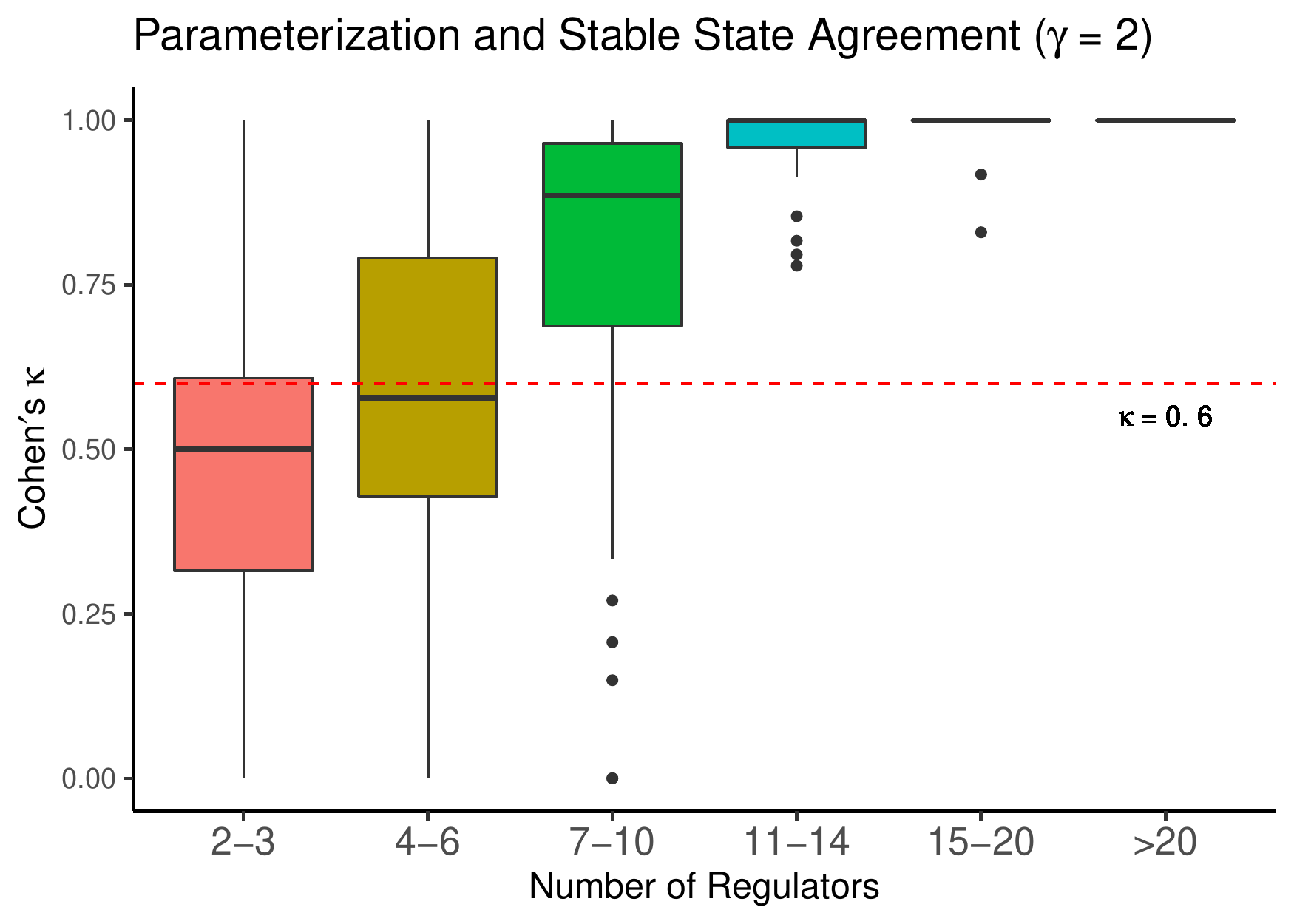}
        \end{subfigure}
    }
    
    \caption{(\textbf{A})-(\textbf{B}) Network statistics for scale-free topologies with different degree exponents. Every network tested has $50$ nodes and a maximum in-degree $k_{max} = 50$. A total of $100$ topologies for $\gamma = 2$ and $1000$ topologies for $\gamma = 2.5$ are compared. Networks with $\gamma = 2$ have a higher median number of nodes with both activating and inhibiting regulators and higher degree hubs. (\textbf{C})-(\textbf{D}) Agreement statistics between link operator parameterization and stable state activity. The data is taken from Boolean models generated with the abmlog software, using scale-free topologies with exponent $\gamma = 2$. A total of $757$ link operator nodes were compared across multiple link operator parameterization configurations with their corresponding stable states. Nodes are grouped in buckets, where each bucket indicates a different range of input regulators. Both the percent agreement and Cohen’s $\kappa$ show considerable congruence between link operator assignment (“AND-NOT” or “OR-NOT”) and resulting stable state (inhibition or activation respectively) for nodes with more than $10$ input regulators.}
    \label{fig:4}
\end{figure}

Our methodology proceeds with using each of the scale-free topologies with $\gamma = 2$ as input to the abmlog software, and generating ensembles of Boolean models parameterized with every possible mix of the “AND-NOT” and “OR-NOT” regulatory functions along with the calculation of their stable states (as demonstrated in Figure \ref{fig:2}).
The produced Boolean models had zero, one, or more stable states.
Interestingly, we observed that around half of the tested scale-free topologies generated Boolean models with no stable states, no matter which combination of link operators was used to define the model parameterization.
Therefore, the randomly assigned regulators, regulatory effects, and Zeta distribution in-degree values, may result in networks which do not have stable phenotypes, suggesting that alternative parameterizations might be more suitable in modeling scenarios which specifically examine stable dynamics.
Nonetheless, we discarded the models with no stable states and used the rest that had single or multiple attractors in our analysis.
Then, for each model node with both activating and inhibiting regulators, we compared its assigned link operator with the activity state value in the corresponding stable state(s), across all the link operator parameterization spectrum that yielded models with stable phenotypes.
The agreement results between parameterization and stable state activity are presented in Figure \ref{fig:4C} for the percent agreement and in Figure \ref{fig:4D} for Cohen’s kappa statistic.

We observe that both presented statistics show a large variation of agreement for nodes with less than $10$ regulators and an increasing agreement with more regulators.
This agreement manifests in link operator nodes parameterized with the “AND-NOT” or “OR-NOT” Boolean functions, while at the same time exhibiting inhibited or active states respectively in the associated model attractors.
Therefore, we conclude that the considered standardized Boolean regulatory functions are biased and their outcomes can be determined a priori from the choice of the corresponding link operator parameterization, especially for nodes with more than $7-10$ regulators.

\section{Discussion} \label{sec:discuss}

The specification of mathematical rules that describe the behavior of biological systems is one of the core aspects of computational modeling.
It is therefore of considerable value to have a list of metrics that can be used to compare different model parameterizations and make an informed decision with regard to the selection of an appropriate regulatory function that better matches the expected behavior in a specific modeling application.

We specifically discussed two characterizations that can assist modelers in comparing various regulatory functions and select the most plausible ones with regard to the causal interaction-based knowledge at hand.
Expressing Boolean functions in DNF makes biological interpretation concrete by explicitly specifying the conditions (presence or absence of the positive and negative regulators, respectively) that make a target active.
Expressing the functions in CDNF allows to easily check for compliance with the underlying regulatory topology and subsequently, the rejection of functions that violate such consistency.
The difference between these two characterizations lies in the fact that the consistency terminology stems from the mathematical world, while biological interpretability is tightly connected to the world of language semantics and thus closer to the modeler's point of view.
Finally, truth density is an informative measure which can be used to verify if the function parameterization dictates biased Boolean outcomes.
It can also be used as a test metric to understand how a function behaves when the number of regulators increases or the balance between the number of activators and inhibitors changes.

Using the truth density metric, we showed the presence of link operator function bias in the hubs of randomly constructed scale-free networks.
A potential application of this finding could be to dramatically decrease the time needed to train Boolean models to fit observations via various optimization methods, by pre-assigning the parameterization of link operator nodes with sufficiently many regulators.
The pruning of the searchable parameterization space, guided by the truth density metric, can result in more efficient automated methods and can enable the training of larger models against data from numerous resources (e.g. large cell line panels).
The hub node bias has also interesting links to the presence of order in biological networks \cite{Kauffman1969}.
The dynamics of a Boolean network can exhibit ordered or chaotic behavior.
Ordered dynamics is characterized by the presence of less stable states and limit cycle attractors with smaller mean length (number of states in a complex attractor) and transition times (number of steps needed to reach an attractor starting out from an arbitrary configuration) \cite{Aldana2003}.
It is also known that the truth density (probability of target expression) as well as the degree exponent $\gamma$ (related to network connectivity) can modulate the dynamic transition between the ordered and chaotic phases.
Moreover, it has been shown that above the critical value of $\gamma_c \sim 2.47$, ordered behavior in the form of stable state dynamics manifests independently of the truth density, whereas for values closer to $\gamma = 2$, order coincides with the presence of high biased nodes (see Fig. 4 in \cite{Aldana2003}).
Our work confirms this phenomenon, since the use of the link operator parameterization guarantees the presence of biased hubs, which enable the scale-free networks to exhibit stability and homogeneity in terms of regulatory output, and thus stay in the ordered dynamic regime.

Searching for other function metrics that are applicable to logical modeling, the \textit{sensitivity} of a Boolean function is one of the most relevant \cite{Shmulevich2004}.
As its name suggests, it measures how sensitive the output of the function is to small changes of its inputs.
Sensitivity is tightly linked to the truth density metric, since a highly homogeneous Boolean function (i.e. a biased one), is unlikely to change its value between similar regulatory input configurations and so, its sensitivity is relatively low.
To compute the average sensitivity value for an arbitrary Boolean function we need to sum over all the \textit{influences} of the input variables, which essentially represent a way to measure individual variable importance.
In the context of regulatory functions, a regulator's influence is defined as the probability that a random toggle on its activity (from active to inactive and vice-versa) will change the value of the Boolean function \cite{Shmulevich2002}.
Therefore, by calculating the influence of every regulator, the modeler can gain knowledge of which ones are more important and control the respective function's outcome.
This transition of perspective from the function level to the regulator level might be advantageous in cases where the modeler's intention is to compare different parameterizations and choose the one for which a particular regulator is labeled as significantly more important than the others, based on the available biological knowledge.

Lastly, an important addition to a universal list of Boolean function metrics for modeling purposes, is the notion of function \textit{complexity}.
A recent definition is given by Gherardi et al. \cite{Gherardi2016}, where the authors defined it as the number of terms in the shortest possible DNF expression of a given Boolean function, divided by the total number of rows in the corresponding truth table.
We presented this information in the last column of Table \ref{tab:table2}, where the BRFs are sorted from lower to higher complexity (note that the CDNF has the minimum number of terms for every BRF included in the table).
One useful observation is that the standardized “AND-NOT” formula \cite{Mendoza2006} is the function with the lowest complexity that is also consistent and thus biologically plausible - all properties that make it a good choice from the modeler's perspective.
Assessing the complexity of the studied regulatory functions using the derived formulas for the minimum number of CDNF terms for any number of activators $m$ and inhibitors $k$ (see last column of Table \ref{tab:table2}), we comment on the fact that all BRFs have very low complexity since $\mathcal{O}(m \times k) \ll 2^{m+k}$, i.e. the number of function terms does not grow as fast as the number of rows in the corresponding truth table.
Same observation has been shown to be true in manually-tuned, experimentally-validated Boolean functions \cite{Gherardi2016}, providing us with another confirmation that the consistent functions from Table \ref{tab:table2} are good candidates for logic-based modeling approaches.

\section{Future work} \label{sec:future}

In this work we make an attempt to address the logical rule specification problem, which can be simply stated as: “Many functions may fit the available observations, which one is the most proper to use?”
Of course what is “proper” can be fairly subjective, but the main point is that a careful consideration of the underlying application context (i.e. what output do I expect in a specific scenario of interest) along with a list of metrics that explicate a Boolean function’s behavior and semantics, provides the user with the appropriate framework to decide on the function parameterization that sets the basis for further model analysis and simulation.
In that regard, interesting directions for further research include the application of the metrics presented in this work in different published biological models, and the subsequent comparison of different regulatory functions within this framework.
Such meta-analyses could potentially indicate regulatory functions that achieve a higher degree of fitness with the observed data or general properties that are common in all Boolean functions used to model biological systems.

An interesting study for example would be to analyze Boolean functions from published biological models that have extreme activator-to-inhibitor ratios.
If such imbalanced ratios also result in proportionally skewed Boolean outcomes (i.e. with more activators, the truth density is closer to $1$ and the reverse with more inhibitors), suggesting that target outcome follows the majority regulatory groups, then the use of threshold functions could be a more proper parameterization alternative, as was shown in Figure \ref{fig:1B}.
Of course, we note that each individual case must be examined with care, since there might be high influence nodes, whose activity defines the target’s output even in the presence of a much larger regulatory group with opposite effects.
For example, \texttt{CASP3} is a biological entity that, when activated, will almost certainly result in the cell’s death even in the presence of a majority of proliferation-positive regulators at any given time.
Subsequently, a more appropriate choice based on the results of this study can be made, either by choosing between the biased functions, which demonstrate a more balanced behavior for such extreme activator-to-inhibitor ratios (e.g. using the “Pairs” or the “AND-NOT” functions which are balanced vs using the “OR-NOT” which would make the target activated most of the time, see \nameref{scenario:2}) or by using refined threshold functions, in which each regulator’s weight will differ in order to match the influence that it has on the target.

There have been only a handful examples of published logical models \cite{Li2004, Davidich2008} and research papers \cite{Chaouiya2013, Wagner1994, Oikonomou2006, Kaderali2009, Greenbury2010, Jack2011, Zanudo2011, Sharan2013} that use the threshold modeling framework in biological systems.
This is partly due to the lack of tools that make threshold functions accessible to the average user, and the availability of such software in open-source environments such as the CoLoMoTo Interactive Notebook \cite{Naldi2018a}.
We believe that the existence of such novel software will enable the construction and configuration of generic Boolean threshold models and provide users of the logical-modeling community and beyond with the necessary toolbox to further study these models.
This will enable applications that depend on the dynamical analysis of Boolean threshold models (identification of attractors, reachability properties, formal verification and control) and the use of optimization methods to calibrate the threshold function parameters to best fit the available experimental data, as is done currently with analytical logic-based functions \cite{Gjerga2020}.

\section*{Software and Data Availability} \label{sec:soft}

The \textit{abmlog} software that was used to generate Boolean models with the “AND-NOT” and “OR-NOT” Boolean regulatory functions is available at \url{https://github.com/druglogics/abmlog} under the MIT License.
We used the version 1.6.0 for this analysis, which is also offered as a standalone package at \url{https://github.com/druglogics/abmlog/packages}.

An extended analysis accompanying the results of this paper is available at \url{https://druglogics.github.io/brf-bias}.
It includes links to the produced model datasets and scripts to reproduce the results and figures of this paper.
In particular, the correlation analysis between average node state in the CASCADE 1.0 models and percent agreement per each link operator is available at \url{https://druglogics.github.io/brf-bias/cascade-1-0-data-analysis.html#node-state-and-percent-agreement-correlation}.
The degree distribution of the CASCADE 1.0 topology and other network statistics are examined in \url{https://druglogics.github.io/brf-bias/cascade-1-0-data-analysis.html#network-properties}.

\section*{Funding}
 
This work was supported by ERACoSysMed Call 1 project COLOSYS (JZ, MK), project UIDB/50021/2020 from Fundação para a Ciência e a Tecnologia - INESC-ID multi-annual funding (PM), the Norwegian University of Science and Technology’s Strategic Research Area ‘NTNU Health’ and The Joint Research Committee between St. Olavs hospital and the Faculty of Medicine and Health Sciences, NTNU - FFU (ÅF).

\textit{Conflict of Interest}: none declared.
 
\section*{Acknowledgments}
 
The authors acknowledge Dr. Vasundra Touré for her contribution in the improvement of Figure \ref{fig:2} and its caption.

\newpage

\appendix

\section{Truth Density formula proofs} \label{appx:a}
For all the following propositions, we consider $f$ to be a Boolean regulatory function $f_{BRF}:\{0,1\}^n \rightarrow \{0,1\}$, with a total of $n$ input regulators separated to two distinct sets, the set of $m \ge 1$ activators $x=\{x_i\}_{i=1}^{m}$ and the set of $k \ge 1$ inhibitors $y=\{y_j\}_{j=1}^{k}$, such that $n = m + k$.

\vspace{5pt}

\begin{proposition}[“AND-NOT” Truth Density] \label{prop:and-not}
The truth density of the “AND-NOT” link operator function $f_{AND-NOT}(x,y) = \left(\bigvee_{i=1}^{m} x_i\right) \land \lnot \left(\bigvee_{j=1}^{k} y_j\right)$, with $m \ge 1$ activators and $k \ge 1$ inhibitors, is given by the formula:
\begin{equation} \label{td:and-not}
    TD_{AND-NOT}=\frac{2^m-1}{2^n} = \frac{1}{2^k}-\frac{1}{2^n}
\end{equation}
\end{proposition}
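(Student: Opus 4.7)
The plan is to compute the number of input configurations in $\{0,1\}^n$ on which $f_{AND-NOT}$ evaluates to $1$, and then divide by the total number of rows $2^n$ in the truth table to obtain the truth density directly from its definition.

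First, I would observe that the AND-NOT formula is a conjunction of two independent sub-expressions over disjoint variable sets: the activator part $A(x) = \bigvee_{i=1}^{m} x_i$ depends only on $x \in \{0,1\}^m$, and the inhibitor part $I(y) = \lnot \bigvee_{j=1}^{k} y_j$ depends only on $y \in \{0,1\}^k$. Because these sub-formulas operate on disjoint coordinates, the set of satisfying assignments factorizes, so the number of $(x,y)$ with $f_{AND-NOT}(x,y)=1$ equals the product of the number of $x$ with $A(x)=1$ and the number of $y$ with $I(y)=1$.

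Next I would count each factor. The activator disjunction $A(x)$ is true on every $x \in \{0,1\}^m$ except the all-zeros configuration, giving $2^m - 1$ satisfying assignments. The inhibitor part $I(y)$ is true precisely when $\bigvee_{j=1}^{k} y_j = 0$, i.e.\ when all $y_j$ are $0$, giving exactly $1$ satisfying assignment. Hence the number of $1$-rows of the truth table is $(2^m - 1) \cdot 1 = 2^m - 1$.

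Finally, applying the definition $TD = \#\{(x,y) : f(x,y)=1\}/2^n$ yields
\begin{equation*}
TD_{AND-NOT} = \frac{2^m - 1}{2^n},
\end{equation*}
and the alternative form follows by splitting the fraction and using $n = m + k$:
\begin{equation*}
\frac{2^m - 1}{2^n} = \frac{2^m}{2^{m+k}} - \frac{1}{2^n} = \frac{1}{2^k} - \frac{1}{2^n}.
\end{equation*}
There is no real obstacle here; the proof is a straightforward counting argument, and the only thing to be careful about is that the activator and inhibitor variable sets are disjoint so that the satisfying assignments truly factorize. This is guaranteed by the partitioning assumption $n = m + k$ stated in the setup of the proposition.
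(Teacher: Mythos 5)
Your proof is correct and reaches the count $2^m-1$ by the same essential argument as the paper: the activator part must not be all-false ($2^m-1$ choices) while every inhibitor must be false (exactly one choice), and dividing by $2^n$ gives the result. The only cosmetic difference is that the paper first rewrites $f_{AND-NOT}$ in its equivalent DNF via distributivity and De Morgan's law before counting the satisfying rows, whereas you exploit the factorization of the conjunction over the disjoint variable sets $x$ and $y$ directly, which is arguably a touch cleaner.
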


\begin{proof}
Using the distributive property and De Morgan's law we can express $f_{AND-NOT}$ (Eq. \ref{eq:and-not}) in the equivalent DNF:
\begin{equation*}
\begin{split}
f_{AND-NOT}(x,y) & = \left(\bigvee_{i=1}^{m} x_i\right) \land \lnot \left(\bigvee_{j=1}^{k} y_j\right) \\
                 & = \bigvee_{i=1}^{m} \left( x_i \land \lnot \left( \bigvee_{j=1}^{k} y_j \right) \right) \\ 
                 & = \bigvee_{i=1}^{m} (x_i \land \bigwedge_{j=1}^{k} \lnot y_j) \\ 
                 & = \bigvee_{i=1}^{m} (x_i \land \lnot y_1 \land ... \land \lnot y_k)
\end{split}
\end{equation*}

To calculate $TD_{AND-NOT}$, we need to find the number of rows in $f_{AND-NOT}$'s truth table that result in a $True$ output result and divide that by the total number of rows, which is $2^n$ ($n$ input regulators).

Note that $f_{AND-NOT}$, written in it's equivalent DNF, has exactly $m$ terms.
Each term has a unique $True/False$ assignment of regulators that makes it $True$.
This happens when the activator of the term is $True$ and all of the inhibitors $False$.
Since the condition for the inhibitors is the same regardless of the term we are examining and $f$ is expressed in DNF, the $True$ outcomes of the function $f$ are defined by all logical assignment combinations of the $m$ activators that have at least one of them being $True$ and all inhibitors assigned as $False$.
There are a total of $2^m$ possible $True/False$ logical assignments of the $m$ activators (from all $False$ to all $True$) and $f_{AND-NOT}$ becomes $True$ on all except one of them (i.e. when all activators are $False$), with the corresponding $2^m-1$ truth table rows having all inhibitors assigned as $False$.
Therefore, $TD_{AND-NOT}=(2^m-1)/2^n$.
\end{proof}

\begin{proposition}[“OR-NOT” Truth Density] \label{prop:or-not}
The truth density of the “OR-NOT” link operator function $f_{OR-NOT}(x,y) = \left(\bigvee_{i=1}^{m} x_i\right) \lor \lnot \left(\bigvee_{j=1}^{k} y_j\right)$, with $m \ge 1$ activators and $k \ge 1$ inhibitors, is given by the formula:
\begin{equation} \label{td:or-not}
    TD_{OR-NOT}=\frac{2^n-(2^k-1)}{2^n} = 1 - \frac{1}{2^m} + \frac{1}{2^n}
\end{equation}
\end{proposition}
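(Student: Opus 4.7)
The cleanest route is by complementation: since $f_{OR-NOT}$ is a disjunction of the activator block with the negated inhibitor block, it fails exactly when \emph{both} disjuncts fail. That single observation avoids the inclusion-exclusion that would arise from trying to count the $True$ rows of the DNF $x_1 \lor \cdots \lor x_m \lor (\lnot y_1 \land \cdots \land \lnot y_k)$ directly.

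The plan is as follows. First, I would apply De Morgan's law to the whole expression to obtain
\begin{equation*}
    \lnot f_{OR-NOT}(x,y) \;=\; \lnot\!\left(\bigvee_{i=1}^{m} x_i\right) \land \left(\bigvee_{j=1}^{k} y_j\right) \;=\; \left(\bigwedge_{i=1}^{m} \lnot x_i\right) \land \left(\bigvee_{j=1}^{k} y_j\right).
\end{equation*}
Next, I would count the input assignments on which this negated function evaluates to $True$: the conjunctive block over activators forces the unique assignment $x_1 = \cdots = x_m = 0$, while the disjunctive block over inhibitors is satisfied by any inhibitor assignment except the all-zero one, giving $2^k - 1$ possibilities. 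Since the two blocks share no variables, the total count of $False$ rows of $f_{OR-NOT}$ is exactly $1 \cdot (2^k - 1) = 2^k - 1$.

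Finally, I would compute the truth density by subtracting from the $2^n$ total rows:
\begin{equation*}
    TD_{OR-NOT} \;=\; \frac{2^n - (2^k - 1)}{2^n} \;=\; 1 - \frac{2^k - 1}{2^n},
\end{equation*}
and then simplify using $n = m+k$ to obtain $1 - \tfrac{1}{2^m} + \tfrac{1}{2^n}$, which matches both claimed forms in the proposition. There is no real obstacle here; the argument is dual to the AND-NOT proof in Proposition \ref{prop:and-not}, with the roles of activators and inhibitors (and of counting $True$ versus $False$ rows) swapped. The only point worth emphasizing in the write-up is the justification that the two blocks contribute independently, which is immediate because the activator and inhibitor variable sets are disjoint.
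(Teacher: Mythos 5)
Your proposal is correct and follows essentially the same route as the paper's proof: both count the $R_{false}=2^k-1$ rows on which the function is $False$ (all activators $False$ and at least one inhibitor $True$) and subtract from $2^n$. Your presentation via negating the whole expression with De Morgan's law is a slightly cleaner way to organize the same counting argument that the paper carries out on the DNF.
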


\begin{proof}
Using De Morgan’s law we can express $f_{OR-NOT}$ (Eq. \ref{eq:or-not}) in the equivalent DNF:
\begin{equation*}
\begin{split}
f_{OR-NOT}(x,y) & = \left(\bigvee_{i=1}^{m} x_i\right) \lor \lnot \left(\bigvee_{j=1}^{k} y_j\right) \\
                & = \left(\bigvee_{i=1}^{m} x_i\right) \lor \left(\bigwedge_{j=1}^{k} \lnot y_j\right) \\
                & = x_1 \lor x_2 \lor ... \lor x_m \lor (\lnot y_1 \land ... \land \lnot y_k)
\end{split}
\end{equation*}

To calculate $TD_{OR-NOT}$, we find the number of rows of $f_{OR-NOT}$'s truth table that result in a $False$ output ($R_{false}$), subtract that number from the total number of rows ($2^n$) to get the rows that result in $f$ being $True$, and then divide by the total number of rows.
As such, $TD_{OR-NOT} = (2^n-R_{false})/2^n$.

Note that $f_{OR-NOT}$, expressed in it's equivalent DNF, has exactly $m+1$ terms.
To make $f_{OR-NOT}$ $False$, we assign the $m$ activators as $False$ and then we investigate which logical assignments of the inhibitors $\{y_j\}_{j=1}^{k}$ make the last DNF term also $False$.
Out of all the possible $2^k$ $True/False$ logical assignments of the $k$ inhibitors (ranging from all $False$ to all $True$) there is only one that does not make the last term of $f_{OR-NOT}$ $False$, which happens specifically when all $k$ inhibitors are $False$.
Therefore, $R_{false}=2^k-1$ and $TD_{OR-NOT}=(2^n-(2^k-1))/2^n$.
\end{proof}

\begin{proposition}[“Pairs” Truth Density] \label{prop:pairs}
The truth density of the “Pairs” link operator function $f_{Pairs}(x,y) = \bigvee_{\forall (i,j)}^{m,k}(x_i\land \lnot y_j)$, with $m \ge 1$ activators and $k \ge 1$ inhibitors, is given by the formula:
\begin{equation} \label{td:pairs}
    TD_{Pairs}=\frac{(2^m-1)(2^k-1)}{2^n}
\end{equation}
\end{proposition}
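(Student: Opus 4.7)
My plan is to directly count the number of truth-table rows on which $f_{Pairs}$ evaluates to $True$ and divide by $2^n$. The key observation is a semantic rewriting of the function: the disjunction $\bigvee_{(i,j)}(x_i \land \lnot y_j)$ evaluates to $True$ iff there exists at least one pair consisting of a present activator and an absent inhibitor, and this in turn is equivalent to the conjunction of two independent conditions, namely \emph{``at least one activator is present''} and \emph{``at least one inhibitor is absent''}. This equivalence is the conceptual heart of the proof and I would justify it by noting that if some $x_{i_0}=1$ and some $y_{j_0}=0$, then the pair $(i_0,j_0)$ witnesses the disjunction, and conversely any witnessing pair forces both existential conditions to hold.

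Once this reformulation is in place, the counting decomposes cleanly because the two conditions refer to disjoint blocks of variables ($x$-variables and $y$-variables). First I would count the number of assignments of the $m$ activators that have at least one $x_i=1$: out of all $2^m$ Boolean assignments to $(x_1,\dots,x_m)$, exactly one (the all-zero one) fails the condition, leaving $2^m-1$. By the symmetric argument applied to the $k$ inhibitors, exactly one assignment (the all-one one) fails the second condition, giving $2^k-1$ admissible assignments of $(y_1,\dots,y_k)$.

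Since the two conditions are independent, the number of $True$-producing rows is the product $(2^m-1)(2^k-1)$, and dividing by the total number $2^n$ of truth-table rows yields the claimed formula $TD_{Pairs} = (2^m-1)(2^k-1)/2^n$. I do not anticipate any real obstacle here; the only subtle step is the equivalence between the existential-pair statement and the conjunction of two one-sided existential statements, which (unlike in the "AND-NOT" and "OR-NOT" proofs) needs to be argued before the counting, rather than read off directly from a DNF with few terms. As a cross-check, one can alternatively compute $1 - TD_{Pairs}$ by inclusion-exclusion or by counting rows where all activators are absent or all inhibitors are present, obtaining $1-(2^m-1)(2^k-1)/2^n = (2^m + 2^k - 1)/2^n$, which matches the number of "all-$x$-zero" rows plus "all-$y$-one" rows minus the double-counted row where both hold.
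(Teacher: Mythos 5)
Your proof is correct, and its conceptual core --- the equivalence of $\bigvee_{(i,j)}(x_i \land \lnot y_j)$ with the conjunction ``at least one activator present \emph{and} at least one inhibitor absent'' --- is exactly the reformulation the paper uses, which it obtains syntactically as the CNF $\left(\bigvee_{i=1}^{m} x_i\right) \land \left(\bigvee_{j=1}^{k} \lnot y_j\right)$ via the distributive law. Where you genuinely diverge is in the counting step. The paper counts $R_{act} = 2^k(2^m-1)$ rows having at least one $True$ activator and then subtracts the $R_{inh} = 2^m - 1$ of those rows in which all inhibitors are $True$, arriving at the product only after algebraic refactoring; this subtraction requires a small side argument about which all-activators-$False$ row is or is not already excluded. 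You instead observe that the two conditions constrain disjoint blocks of variables, count $2^m-1$ admissible $x$-assignments and $2^k-1$ admissible $y$-assignments separately, and multiply. Your route is more direct: it produces the factored form $(2^m-1)(2^k-1)$ immediately, avoids the bookkeeping about overlapping row sets, and your inclusion--exclusion cross-check on the complement ($2^m + 2^k - 1$ failing rows) is a nice independent confirmation that the paper does not include. Both arguments are sound; yours is the cleaner of the two for this particular function.
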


\begin{proof}
Using the distributive property we can express $f_{Pairs}$ (Eq. \ref{eq:pairs}) in its equivalent conjunction normal form (CNF), where two separate clauses are connected with AND’s ($\land$) and inside the clauses the literals are connected with OR’s ($\lor$): 

\begin{equation} \label{eq:pairs-cnf}
    f_{Pairs}(x,y) = \bigvee_{\forall (i,j)}^{m,k}(x_i\land \lnot y_j) = \left(\bigvee_{i=1}^{m} x_i\right) \land \left(\bigvee_{j=1}^{k} \lnot y_j\right)
\end{equation}

To calculate $TD_{Pairs}$, based on its given CNF, we find the number of rows in its truth table that have at least one $True$ activator ($R_{act}$) and subtract from these the rows in which all inhibitors are $True$ ($R_{inh}$).
Therefore, only the rows that have at least one $True$ activator and at least one $False$ inhibitor will be left, corresponding to the biological interpretation of $f_{Pairs}$. As such, $TD_{Pairs}=(R_{act}-R_{inh})/2^n$.

$R_{act}$ can be found by subtracting from the total number of rows ($2^n$), the rows that have all activators as $False$.
The number of these rows depends on the number of inhibitors, since for each one of the total possible $2^k$ $True/False$ logical assignments of the $k$ inhibitors (ranging from all $False$ to all $True$), there will be a row in the truth table with all activators as $False$.
Therefore, $R_{act} = 2^n - 2^k = 2^{m+k} - 2^k = 2^k(2^m-1)$.

$R_{inh}$ depends on the number of activators, since for each one of the total possible $2^m$ $True/False$ logical assignments of the $m$ activators (ranging from all $False$ to all $True$), there will be a row in the truth table with all inhibitors as $True$.
Note that we have to exclude one row from this result, which is exactly the row that has all activators as $False$ since it's not included in the $R_{act}$ rows.
Therefore, $R_{inh}=2^m-1$ and $TD_{Pairs}=(R_{act}-R_{inh})/2^n=(2^k(2^m-1)-(2^m-1))/2^n$.
\end{proof}

\begin{proposition}[Threshold functions Truth Density] \label{prop:thres}
The truth density of the Boolean threshold functions “Act-win” (Eq. \ref{eq:thres-act}) and “Inh-win” (Eq. \ref{eq:thres-inh}), with $m \ge 1$ activators and $k \ge 1$ inhibitors, is given by the formula:

\begin{equation} \label{td:thres}
TD_{thres} = \frac{\sum_{i=1}^m \left[ \binom{m}{i} \times \sum_{j=0}^{min(u,k)} \binom{k}{j} \right]}{2^n}
\end{equation}

where $u=i$ or $i-1$, depending on the use of the “Act-win” or “Inh-win” function respectively.
\end{proposition}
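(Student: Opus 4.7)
The plan is to count the True rows of the truth table directly by a combinatorial partition argument and then divide by the total row count $2^n$. The key observation that makes this tractable is the permutation symmetry of both threshold functions: their output depends only on the Hamming weights $a := \sum_{i=1}^m x_i$ and $b := \sum_{j=1}^k y_j$ of the activator and inhibitor subvectors, not on which specific regulators happen to be active. This is exactly what allows a combinatorial (rather than Boolean-algebraic) counting approach, in contrast to the link-operator proofs.

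First, I would partition the $2^n$ rows of the truth table into blocks indexed by the pair $(a,b)$ with $0 \le a \le m$ and $0 \le b \le k$, observing that the block $(a,b)$ contains exactly $\binom{m}{a}\binom{k}{b}$ rows --- one for each choice of which $a$ activators and which $b$ inhibitors are set to $1$. Second, I would identify which blocks produce the output True: for $f_{Act-win}$ these are the pairs satisfying $a \ge b$, while for $f_{Inh-win}$ these are the pairs satisfying $a > b$, equivalently $b \le a - 1$, which additionally forces $a \ge 1$. Third, summing the block sizes over the qualifying pairs yields the numerator of the truth density. Rewriting the inner sum over $b$ with upper bound $\min(u,k)$ --- where the $\min$ with $k$ enforces the range constraint $b \le k$ while $u$ encodes the majority condition --- produces the unified closed form in the proposition, with $u = a$ in the Act-win case and $u = a - 1$ in the Inh-win case. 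Division by $2^n$ completes the derivation.

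The calculation itself is elementary; the main obstacle is simply bookkeeping the boundary conditions cleanly. One must verify that the cap $\min(u,k)$ is applied correctly when $a$ exceeds $k$ (so that $b$ never overruns its allowed range), and that the extremes $a = 0$ and $a = 1$ are treated consistently with the outer summation's starting index for each variant. Because the proposition packages the two threshold rules into a single expression through the parameter $u$, the main care required lies in confirming that the same combinatorial sum captures both the non-strict ($\ge$) and strict ($>$) majority rules, in particular handling the diagonal $a = b$ without ambiguity between the two cases.
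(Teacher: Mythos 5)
Your proposal is correct and follows essentially the same route as the paper's proof: both partition the $2^n$ truth-table rows by the number of present activators and present inhibitors, give each block size $\binom{m}{i}\binom{k}{j}$, and sum over the pairs satisfying the strict or non-strict majority condition, with $\min(u,k)$ capping the inner index. The one boundary point you flag but leave unresolved --- the block $(a,b)=(0,0)$, which satisfies $a \ge b$ for ``Act-win'' yet is excluded because the outer sum starts at $i=1$ --- is passed over silently in the paper's own proof as well, so your treatment matches theirs.
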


\begin{proof}
The truth density formula can be easily derived from the observation that we need to count the number of rows in the respective truth table that have more $True$ activators than $True$ inhibitors.
In the case of the “Act-win” function, we also need to add the rows that have an equal number of $True$ regulators in each respective category.

Firstly, we count all the subset input configurations that have up to $m$ activators assigned to $True$.
These include the partial $True/False$ logical assignments that have either a single $True$ activator, a pair of $True$ activators, a triplet, etc.
This is exactly the term $\sum_{i=1}^m \binom{m}{i}$.
Note that each of these activator input configurations is multiplied by a factor of $2^k$ in the truth table to make \textit{complete} rows, i.e. rows where the activators logical assignments stay unchanged and the inhibitor values range from all $False$ to all $True$.
Therefore, we need to specify exactly which inhibitor logical assignments are appropriate for each activator subset input configuration.
To do that, we multiply the size of each activator subset $\binom{m}{i}$ with the number of configurations that have less $True$ inhibitors, i.e. $\sum_{j=0}^{i-1} \binom{k}{j}$.

Let's consider an example with $m,k > 2$ and set $i=2$. 
We find that the number of subsets with $2$ $True$ activators is $\binom{m}{2}$.
Next, we multiply by the number of configurations that have one or no $True$ inhibitors, i.e. $\sum_{j=0}^{1} \binom{k}{j}$. 
This results in the number of rows of interest for the “Inh-win” function, i.e. the rows where there are exactly $2$ activators assigned to $True$ and less than $2$ $True$ inhibitors.
For “Act-win”, we have to multiply up to the $True$ inhibitor pairs, i.e. $\sum_{j=0}^{2} \binom{k}{j}$.
In summation, we count the configurations that have exactly $i$ out of $m$ activators assigned to $True$, and for each one, we multiply by the number of cases that have $0$ up to $i$ inhibitors assigned to $True$ to find the respective rows, i.e. $\binom{m}{i} \times \sum_{j=0}^{i} \binom{k}{j}$.
Repeating this calculation for every possible subset of $i$ activators (from $1$ up to all $m$ of them), and summing the rows up, will result in the numerator of the $TD_{thres}$ formula for the “Act-win” function.

Lastly, note that the \textit{largest} inhibitor configuration subset size that we consider, is the minimum value between the current activator subset size ($u=i$ or $i-1$, depending on which threshold function we use) and the total number of inhibitors $k$.
Therefore, we take into account the case where the number of inhibitors is less than the activator subset size, i.e. $k < u$.
This explains the term $min(u,k)$ in the truth density formula and concludes the proof.
\end{proof}

\section{Truth Density asymptotic behavior} \label{appx:b}
We study the asymptotic behavior of the four truth density formulas (Appendix \ref{appx:a}) for a large number of regulators ($n \rightarrow \infty$).
Note that for the calculations involving the two threshold functions, we will only use the truth density formula corresponding to the “Act-win” function (Eq. \ref{td:thres}, with $u=i$), since both functions have similar formulas and therefore, their limiting behavior is analogous. 
The asymptotics results for each regulatory function are as follows:

\begin{enumerate}
    \item
    The “AND-NOT” function truth density (Eq. \ref{td:and-not}) depends only on the number of inhibitors $k$:
    \begin{equation} \label{td-asymp:and-not}
        TD_{AND-NOT} = \frac{1}{2^k}-\frac{1}{2^n} \sim \frac{1}{2^k}
    \end{equation}
    
    For large $k$, it is biased towards $0$:
    
    \begin{equation*}
        TD_{AND-NOT} = \frac{1}{2^k} \xrightarrow{k \rightarrow \infty}0
    \end{equation*}
    
    \item
    The “OR-NOT” function truth density (Eq. \ref{td:or-not}) depends only on the number of activators $m$:
    \begin{equation} \label{td-asymp:or-not}
        TD_{OR-NOT} = 1 - \frac{1}{2^m} + \frac{1}{2^n} \sim 1-\frac{1}{2^m}
    \end{equation}
    
    For large $m$, it is biased towards $1$:
    
    \begin{equation*}
        TD_{OR-NOT} = 1-\frac{1}{2^m} \xrightarrow{m \rightarrow \infty} 1
    \end{equation*}
    
    \item
    The “Pairs” function truth density (Eq. \ref{td:pairs}) depends on both activators and inhibitors:
    \begin{equation} \label{td-asymp:pairs}
        TD_{Pairs} = \frac{(2^m-1)(2^k-1)}{2^n} = \frac{2^n-2^m-2^k+1}{2^n} = 1 -  \frac{2^m+2^k}{2^n} + \frac{1}{2^n} \sim 1 - \frac{1}{2^k} - \frac{1}{2^m}
    \end{equation}
    
    \item
    The threshold functions truth density (Eq. \ref{td:thres}) depends on both $m$ and $k$ variables and does not have a single fixed limit for $n \rightarrow \infty$.
\end{enumerate}

We now focus on the effect of the ratio ($m:k$) between number of activators and inhibitors on the asymptotic truth density values for $n \rightarrow \infty$.
We consider the following three scenarios for each of the Boolean functions:

\paragraph{Scenario 1} \label{scenario:1} A $1:1$ activator-to-inhibitor ratio, where approximately half of the regulators are activators and half are inhibitors, i.e. $m \approx k \approx n/2$ (consider $n$ is even without loss of generality).

\begin{enumerate}
    \item The “AND-NOT” function truth density is biased towards $0$:
    \begin{equation*}
        \text{(Eq. \ref{td-asymp:and-not})} \Rightarrow TD_{AND-NOT} \sim \frac{1}{2^{n/2}} \xrightarrow{n \rightarrow \infty} 0
    \end{equation*}
    
    \item The “OR-NOT” function truth density is biased towards $1$:
    \begin{equation*}
        \text{(Eq. \ref{td-asymp:or-not})} \Rightarrow TD_{OR-NOT} \sim 1 - \frac{1}{2^{n/2}} \xrightarrow{n \rightarrow \infty} 1
    \end{equation*}
    
    \item The “Pairs” function truth density is biased towards $1$:
    \begin{equation*}
        \text{(Eq. \ref{td-asymp:pairs})} \Rightarrow TD_{Pairs} \sim 1 - \frac{1}{2^{n/2}} - \frac{1}{2^{n/2}} \xrightarrow{n \rightarrow \infty} 1
    \end{equation*}
    
    \item The threshold functions truth density is balanced, meaning its limit asymptotically approaches $1/2$.
    \begin{proof}
    We first rewrite the truth density formula substituting $m = k = n/2$:
    \begin{equation*}
        \text{(Eq. \ref{td:thres})} \Rightarrow TD_{thres} = \frac{\sum_{i=1}^{n/2} \left[ \binom{n/2}{i} \times \sum_{j=0}^{min(i,n/2)} \binom{n/2}{j} \right]}{2^n} = \frac{\sum_{i=1}^{n/2} \left[ \binom{n/2}{i} \times \sum_{j=0}^i \binom{n/2}{j} \right]}{2^n}=\frac{N}{2^n}
    \end{equation*}
    
    Next we simplify $N$, by using the notation $z=n/2$ and $\boldsymbol{x}$ as a meta-symbol for $\binom{z}{x}$.
    For example, $\binom{n/2}{1}=\binom{z}{1}=\boldsymbol{1}$.
    $N$ is therefore expressed as:
  
    \begin{equation*}
        N=\boldsymbol{1}(\boldsymbol{0}+\boldsymbol{1})+\boldsymbol{2}(\boldsymbol{0}+\boldsymbol{1}+\boldsymbol{2})+...+\boldsymbol{z}(\boldsymbol{0}+\boldsymbol{1}...+\boldsymbol{z})
    \end{equation*}
  
    Using the symmetry of binomial coefficients: $\binom{z}{x}=\binom{z}{z-x} \sim \boldsymbol{x} =\boldsymbol{z-x}$, we can re-write $N$ as:
  
    \begin{equation*}
        N=(\boldsymbol{z-1})[\boldsymbol{z}+(\boldsymbol{z-1})]+(\boldsymbol{z-2})[\boldsymbol{z}+(\boldsymbol{z-1})+(\boldsymbol{z-2})]+...+\boldsymbol{0}[\boldsymbol{z}+...+\boldsymbol{0}]
    \end{equation*}
  
    Adding the two expressions for $N$ we have that:
  
    \begin{equation*}
        2N=[\boldsymbol{0}+\boldsymbol{1}...+\boldsymbol{z}]^2+\boldsymbol{1}^2+\boldsymbol{2}^2+...+(\boldsymbol{z-1})^2=2^{2z}+\sum_{\boldsymbol{x}=1}^{z-1} \boldsymbol{x}^2
    \end{equation*}
    
    Substituting back $\binom{z}{x}=\boldsymbol{x}$ and $i=x$ (change of index) in expression $N$, we have that the threshold functions truth density is written as: 
    
    \begin{equation*}
        TD_{thres} = \frac{N}{2^{2z}} = \frac{(1/2) \left[2^{2z}+\sum_{i=1}^{z-1} \binom{z}{i}^2 \right]}{2^{2z}}
    \end{equation*}
  
    As $n \rightarrow \infty$ (and hence $z \rightarrow \infty$), the term $\sum_{i=1}^{z-1} \binom{z}{i}^2$ does not grow as fast as $2^{2z}$ - it is smaller by a factor of $\sqrt{\pi z}$ (see answer to Problem 9.18 in \cite{Graham1994}), and so it becomes negligible:
    
    \begin{equation*}
        \lim_{z\to\infty}TD_{thres}=\lim_{z\to\infty}\frac{(1/2)2^{2z}}{2^{2z}}=\frac12
    \end{equation*}
    \end{proof}
\end{enumerate}

\paragraph{Scenario 2} \label{scenario:2} A high activator-to-inhibitor ratio ($n-1:1$), where all regulators are activators except one inhibitor, i.e. $m = n - 1, k = 1$.

\begin{enumerate}
    \item The “AND-NOT” function truth density is balanced:
    \begin{equation*}
        \text{(Eq. \ref{td-asymp:and-not})} \Rightarrow TD_{AND-NOT} \sim \frac{1}{2^1} = \frac12
    \end{equation*}
    
    \item The “OR-NOT” function truth density is biased towards $1$:
    \begin{equation*}
        \text{(Eq. \ref{td-asymp:or-not})} \Rightarrow TD_{OR-NOT} \sim 1 - \frac{1}{2^{n-1}} \xrightarrow{n \rightarrow \infty} 1
    \end{equation*}
    
    \item The “Pairs” function truth density is balanced:
    \begin{equation*}
        \text{(Eq. \ref{td-asymp:pairs})} \Rightarrow TD_{Pairs} \sim 1 - \frac{1}{2^1} - \frac{1}{2^{n-1}} \xrightarrow{n \rightarrow \infty} \frac12
    \end{equation*}

    \item The threshold functions truth density is biased towards $1$:
    \begin{align*}
        \text{(Eq. \ref{td:thres})} \Rightarrow TD_{thres} &= \frac{\sum_{i=1}^{n-1} \left[ \binom{n-1}{i} \times \sum_{j=0}^{min(i,1)} \binom{1}{j} \right]}{2^n} = \frac{\sum_{i=1}^{n-1} \left[ \binom{n-1}{i} \times \sum_{j=0}^{1} \binom{1}{j} \right]}{2^n}  \\
        &= \frac{\sum_{i=1}^{n-1} \binom{n-1}{i} \times 2}{2^n} = \frac{2^{n-1} - 1}{2^{n-1}} = 1 - \frac{1}{2^{n-1}} \xrightarrow{n \rightarrow \infty} 1
    \end{align*}
\end{enumerate}

\paragraph{Scenario 3} \label{scenario:3} A low activator-to-inhibitor ratio ($1:n-1$), where all regulators are inhibitors except one activator, i.e. $m = 1, k = n - 1$.

\begin{enumerate}
    \item The “AND-NOT” function truth density is biased towards $0$:
    \begin{equation*}
        \text{(Eq. \ref{td-asymp:and-not})} \Rightarrow TD_{AND-NOT} \sim \frac{1}{2^{n-1}} \xrightarrow{n \rightarrow \infty} 0
    \end{equation*}
    
    \item The “OR-NOT” function truth density is balanced:
    \begin{equation*}
        \text{(Eq. \ref{td-asymp:or-not})} \Rightarrow TD_{OR-NOT} \sim 1 - \frac{1}{2^{1}} = \frac12
    \end{equation*}
    
    \item The “Pairs” function truth density is balanced:
    \begin{equation*}
        \text{(Eq. \ref{td-asymp:pairs})} \Rightarrow TD_{Pairs} \sim 1 - \frac{1}{2^{n-1}} - \frac{1}{2^{1}} \xrightarrow{n \rightarrow \infty} \frac12
    \end{equation*}

    \item The threshold functions truth density is biased towards $0$:
    \begin{align*}
        \text{(Eq. \ref{td:thres})} \Rightarrow TD_{thres} &= \frac{\sum_{i=1}^{1} \left[ \binom{1}{i} \times \sum_{j=0}^{min(i,n-1)} \binom{n-1}{j} \right]}{2^n} = \frac{\sum_{j=0}^{min(1,n-1)} \binom{n-1}{j}}{2^n}  \\
        &= \frac{\sum_{j=0}^{1} \binom{n-1}{j}}{2^n} = \frac{1+(n-1)}{2^n} = \frac{n}{2^{n}} \xrightarrow[\text{L'Hôpital Rule}]{n \rightarrow \infty}0
    \end{align*}
\end{enumerate}

\newpage

\bibliography{references}

\end{document}